%% file: main.tex
\documentclass[a4paper,11pt]{article}
\pdfoutput=1

\usepackage{xcolor}
\usepackage{jheppub}
\usepackage[T1]{fontenc}
\usepackage{mathtools,bm,amsthm,microtype}

\hypersetup{
  colorlinks=true,
  linkcolor=red!45!black,
  citecolor=blue!55!black,
  urlcolor=blue!70!black
}

\DeclareMathOperator{\Tr}{Tr}
\DeclareMathOperator{\Ran}{Ran}
\newcommand{\cA}{\mathcal A}
\newcommand{\cD}{\mathcal D}
\newcommand{\cH}{\mathcal H}
\newcommand{\Id}{I}
\newcommand{\ket}[1]{\lvert #1\rangle}
\newcommand{\bra}[1]{\langle #1\rvert}

\newtheorem{theorem}{Theorem}[section]
\newtheorem{lemma}[theorem]{Lemma}
\newtheorem{proposition}[theorem]{Proposition}

\title{One cut, two parities:\\
exact Fibonacci charge responses in the golden chain}
\author{Yi Liang}
\affiliation{Independent Researcher, Beijing, China}
\emailAdd{iantrans2042@gmail.com}

\abstract{We determine the exact cut-charge Born law in the ground state of every
periodic antiferromagnetic golden chain with at least three sites.
Its two parity branches retain different information.  At even length, the
ratio of nontrivial to vacuum probabilities equals the squared Fibonacci
quantum dimension.  The odd vacuum probability equals the even-branch
nontrivial probability times one ground-state Born coordinate, while
normalization fixes the complement.  Parity selects the positive hoop sector
at even length and the complementary sector at odd length.  At each length,
the corresponding cut projector compresses to a scalar in the even
sector or to a non-scalar element of a two-character algebra in the odd
sector.  The unchanged measurement rule yields categorical rigidity at even
length and one-coordinate state sensitivity at odd length.}

\begin{document}
\maketitle

\input{sections/introduction}
\input{sections/model-and-result}
\input{sections/parity-mechanism}
\input{sections/physical-response}
\input{sections/odd-coordinate}
\input{sections/exact-odd-evaluation}
\input{sections/discussion}

\appendix
\input{appendices/projection-geometry}
\input{appendices/carrier-covariance}
\input{appendices/perron-response}
\input{appendices/odd-packet}
\input{appendices/pointed-height}
\input{appendices/odd-evaluator}

\section*{Code availability}
A standalone Python package reproducing the finite-size cut-charge
calculations and the spectral-adjugate evaluation reported in this work is
openly archived on Zenodo as version 2.0.0 \cite{ZenodoCode}.  The package
includes JSON reference outputs for direct comparison.

\bibliographystyle{JHEP}
\bibliography{refs}
\end{document}

%% file: sections/introduction.tex
\section{Introduction}
\label{sec:introduction}

Non-invertible symmetry broadens the organizing principles of quantum
many-body physics beyond the familiar group-based framework
\cite{BhardwajTachikawa2018,Shao2023TASI,SchaferNameki2024}.  Yet categorical organization becomes physically
incisive only when it constrains what a specified measurement can reveal.  For
a fixed quantum state and measurement, which features of the resulting
probability distribution are fixed by categorical structure, and which remain
encoded in the state itself?  This question asks for the operational content
of non-invertible symmetry.  An exact answer would separate information fixed
by the symmetry structure from state-dependent information, clarify what the
symmetry predicts for an observable response, and show how an abstract
organization of sectors becomes a concrete statement about a many-body
system.

Two developments have made that question pressing.  Structural work shows that
topological defect lines realize categorical fusion, diagnose anomalies, and,
when preserved along renormalization-group flows, can restrict the possible
gapped infrared endpoints \cite{ChangEtAl2019}.  Operational work has
formulated local non-invertible actions as completely positive quantum
operations, implemented lattice Kramers--Wannier duality while tracking charge
and twisted-sector data, and realized fusion-category symmetries through
probability-preserving isometries and trace-preserving channels between twisted
sectors \cite{OkadaTachikawa2024,KhanKhanMohd2024,
BartschGaiSchaferNameki2026}.  Together, these advances move the field from
structural constraints toward an explicit operational language of probability,
distinguishing the input state, applied operation, outgoing sector, and
readout.  A fixed-state, fixed-measurement Born response follows a different
data flow.  It starts from a specified state within a fixed many-body
ground-state family, leaves the observable unchanged, and asks for its outcome
distribution.  Outgoing-channel probabilities and fixed-observable Born
statistics are distinct objects.  The unresolved interface is how that
distribution divides between structure-fixed and state-dependent information.
Resolving it requires an exact finite-volume setting in which state, observable,
and categorical dictionary are simultaneously controlled.

The golden chain supplies this exact finite-volume setting.  Restricted
solid-on-solid models organize integrable face weights through local-height
adjacency \cite{AndrewsBaxterForrester1984}.
In its antiferromagnetic regime, the golden chain is related to the critical
\(A_4\) RSOS model, whose scaling theory is the tricritical Ising conformal
field theory with central charge \(7/10\)
\cite{Feiguin2007golden,Trebst2008introduction,Pfeifer2012translation}.
The \(A_4\)-to-Fibonacci fold maps the \(A_4\)
height graph to the Fibonacci fusion-path adjacency and thereby provides a
finite-lattice dictionary between folded height data and anyon path labels.
On the ring, the affine Temperley--Lieb algebra and topological hoop operators
provide the corresponding finite-lattice sector structure
\cite{Belletete2020fusion,AasenFendleyMong2020}.  In the same exact-lattice lineage, selected
one-site reduced-density-matrix entries of the face models have been evaluated
as local-height probabilities in closed form
\cite{FrahmWesterfeld2021,WesterfeldEtAl2023}.  Coupled anyon chains with
non-invertible symmetries provide a contemporary neighboring setting
\cite{AntunesRong2026CoupledAnyonChains}.  These ingredients together
make the lineage suited to the fixed-readout question, combining an
interacting finite-size ground-state family, a controlled algebraic dictionary
and exact methods for local-height probabilities.

We therefore fix one physical family.  For every finite \(L\geq3\), take the
periodic antiferromagnetic golden chain on the cyclic Fibonacci fusion-path
Hilbert space \cite{Feiguin2007golden}, with the same local Hamiltonian
prescription at every length.  Its ground ray is unique, and we write
\(\Omega_L\) for the normalized representative that is strictly positive in
the fusion-path basis.  Lemma~\ref{lem:positive-ground} establishes uniqueness
and strict positivity.  Even and odd lengths form two towers within this
family rather than two models.  The periodic boundary condition, charge
alphabet \(\{\mathbf 1,\tau\}\) and ground-state selection rule are held fixed;
only the site number \(L\) changes.  This family of finite periodic lattices
is the physical regime addressed here.

One cut projector completes the setting.  On each ring choose a fusion edge
\(\ell\) as the cut, and let \(\Pi_{a,L}^{(\ell)}\) be the orthogonal projector onto
the fusion-path label \(a\in\{\mathbf 1,\tau\}\) there.  It is distinct both
from the global topological hoop operator associated with periodic geometry
and from the local two-anyon projector entering the Hamiltonian.  For a
connected interval \(A\) bounded by \(\ell\) and a second edge \(\ell'\),
order its joint endpoint weights \(p_{a,b}\) so that \(a\) labels \(\ell\)
and \(b\) labels \(\ell'\).  The anyonic Schmidt decomposition is graded by
this pair \cite{Pfeifer2014anyonicEntanglement,KatoFurrerMurao2014,
BondersonKnappPatel2017}.  The fusion-edge marginal probability
\(P_a^{(\ell)}=\langle\Omega_L,\Pi_{a,L}^{(\ell)}\Omega_L\rangle
=\sum_{b\in\{\mathbf 1,\tau\}}p_{a,b}\) leaves \(\ell'\) unobserved and is
therefore a one-endpoint marginal rather than the complete two-endpoint
distribution.  Although \(\Pi_{a,L}^{(\ell)}\) is diagonal at one path bond,
its label records accumulated fusion charge.  Path-diagonal locality therefore
differs from microscopic locality among the physical anyons.  Recent circuit,
hardware, and measurement-only proposals also treat fusion or anyon-charge
readout as an operational primitive
\cite{Ayeni2026NativeTopologicalReadout,BylesHornerVarcoePachos2026,
MinevEtAl2025StringNetCondensation,BseisoEtAl2024MinimalFibonacciCircuits,
Bonderson2021MeasuringTopologicalOrder}.  Those settings differ in state
preparation, geometry, and output.  The same
mathematical bond-label projector appears in an open constrained-chain
ensemble \cite{ChandranSchulzBurnell2016}.  The overlap is at the operator
level, while the present Born response is defined by the periodic geometry and
antiferromagnetic ground state.

What remains missing is a finite-volume account of how parity changes the
information carried by one fixed measurement.  We pose the question for the
periodic antiferromagnetic ground-state family at every \(L\geq3\), with the
Hamiltonian prescription, the charge alphabet \(\{\mathbf 1,\tau\}\) and the
cut all held fixed.  Any complete answer must link the
parity-selected hoop sector to scalar or two-character compression of the same
projector, state both finite-size Born branches, and evaluate the ground-state
coordinate that survives at odd length.  How can parity alone alter what this
fixed measurement retains about the corresponding ground state?

The answer is exact at each finite length and covers both parity towers.  With
\(\varphi=(1+\sqrt5)/2\), set \(\delta=(1+\varphi^2)^{-1}\) and
\(c=\varphi^2/(1+\varphi^2)\).  Then, for every \(L\geq3\) and every choice
of cut,
\[
  (P_{\mathbf 1},P_\tau)=
  \begin{cases}
    (\delta,c),&L\ \mathrm{even},\\
    (cq_L,1-cq_L),&L\ \mathrm{odd}.
  \end{cases}
\]
For odd \(L\),
\(q_L=\langle\Omega_L,B_L\Omega_L\rangle\in[0,1]\) is the Born coordinate of
the normalized \(L\)-site ground state in the cut-active character projector
\(B_L\).  Each line is evaluated in its corresponding \(L\)-indexed ground
state, and together they form the exact branches of one cut-charge response
law.  Categorical data fix the even pair outright; at odd length they fix the
dependence on one state-resolved coordinate, whose value is supplied by
\(\Omega_L\).  The even pair coincides with the critical \(A_4\) one-site
local-height values reported in
\cite{FrahmWesterfeld2021,WesterfeldEtAl2023} after the
\(A_4\)-to-Fibonacci fold of \cite{Belletete2020fusion}.  The complete
all-\(L\) statement is theorem~\ref{thm:parity-cut-charge}.

The finite-lattice hoop and defect setting comes from
\cite{Belletete2020fusion,BelleteteEtAl2023}.  Within it, the mechanism proved
here has two exact steps.  First, in the fixed real fusion-path gauge, every
nonzero entry of the transported \(\tau\)-line reference column has sign
\((-1)^L\).  Combined with uniqueness and strict positivity of \(\Omega_L\),
this mathematical column identity gives a strictly positive test vector in
the parity-compatible hoop sector.  Its nonzero overlap with the ground ray,
together with sector orthogonality, selects the positive hoop sector at even
length and the complementary sector at odd length.  Second, compression of
the unchanged cut projector into the selected sector is scalar at even
length and resolves into two nonzero character projectors at odd length.
Scalar compression fixes both even probabilities of this binary readout,
whereas odd compression leaves the single normalized ground-state coordinate
\(q_L\) visible.  Parity therefore changes the response's state dependence
while the Hamiltonian prescription, charge alphabet and cut measurement rule remain
fixed.

A common construction realizes both branches.  In the face-model and
lattice-defect lineage
\cite{AndrewsBaxterForrester1984,Belletete2020fusion}, an \(A_4\)
representation covering both parities intertwines the local generators and
hence the Hamiltonian, maps \(\Omega_L\) to its Perron row state, and transports the cut
projector under one observable dictionary.  A positive congruence marks that
transported projector and yields a simple Perron root whose logarithmic
derivative at zero returns the original cut expectation on either parity tower
(theorem~\ref{thm:marked-perron-response}).  At odd length, pairing the
ground-state spectral adjugate with the cut projector and dividing by
\(c\) gives \(q_L\) as a ratio of finite matrix traces
\eqref{eq:physical-odd-adjugate-evaluation}.  This exact finite-dimensional
representation supplies a calculational route to the surviving coordinate.
Together, the \(A_4\) representation and evaluator support the law and prepare the direct
comparisons that follow, one through the even formula and one through the
projector.

Two direct antecedents meet the present result along different axes.  Along the
formula axis, critical \(A_4\) one-site local-height probabilities reported in
the RSOS/RDM lineage fold to the same even pair.  Those calculations use
thermodynamic ground-state input
\cite{FrahmWesterfeld2021,WesterfeldEtAl2023}.  Along the observable axis, the
same mathematical fusion-path bond-label projector was defined in an open
constrained Fibonacci chain, where exact finite-\(N\) infinite-temperature
probabilities were derived and mid-spectrum thermalization behavior was studied
\cite{ChandranSchulzBurnell2016}.  These establish formula and projector
overlaps, respectively.  Neither entails identity of the resulting physical
probabilities, which also requires the state, geometry and regime of the
expectation value to agree.  In the present work, the unchanged readout is
evaluated in the corresponding unique periodic antiferromagnetic ground state
on a closed ring for every finite length.  Both parity branches enter the exact
law on the same footing, while parity determines the type of information
retained by the response.

The neighboring probability architectures separate when compared by state,
operation, observable, geometry, regime and output.  Categorical channels act
on twisted-sector input states by isometries or trace-preserving channels and
output outgoing-sector transition probabilities
\cite{BartschGaiSchaferNameki2026}.  Tube-sector distinguishability is a
one-shot state-discrimination problem at an entanglement cut.  Its
covariance-constrained POVM uses primitive central idempotents as coarse
outcomes and conditional information within a fiber to refine their
probabilities \cite{He2026TubeDistinguishability}.  Categorical SREE resolves a
full bipartite reduced state with boundary-condition-dependent projectors at
two entangling surfaces and outputs sector-resolved entanglement data in a
continuum or asymptotic regime.  The associated anyonic tests use open critical
chains \cite{SauraBastida2024catSREE,HeymannQuella2025}.  The present work
applies no symmetry operation and holds the periodic antiferromagnetic
ground-state prescription, a binary cut-charge readout and closed-ring geometry
fixed.  It works at every finite length and outputs the Born law of that
readout; only \(L\) varies, with its parity determining the information type.
All four programs turn categorical structure into probabilities, while the six
coordinates assign those probabilities to distinct physical objects.  This
fixes which of those objects the present probabilities describe.

The parity-complete law is exact at every finite \(L\geq3\) for a single
periodic antiferromagnetic golden-chain ground-state family and one cut
projector on each ring.  The statement fixes the state,
observable, closed-ring geometry and finite-volume regime before any scaling
question arises.  On the category-fixed even branch, the cut-charge
probabilities give
\(g_{\mathrm{cut}}:=\sqrt{P_\tau/P_{\mathbf 1}}=\varphi\), an operational
finite-lattice diagnostic.  Affleck--Ludwig boundary entropy is a continuum
boundary-state quantity \cite{AffleckLudwig1991}.  Comparing this lattice
diagnostic with boundary entropy would require a specified boundary-state
dictionary and controlled scaling geometry.  Lattice Ising defect
entanglement provides an example in which zero-mode finite-size corrections
affect comparison with continuum expectations \cite{RoySaleur2022}.  The exact
cut ratio thus supplies a concrete lattice datum for testing a future
boundary-state dictionary.

We specify the model, cut projector and endpoint dictionary in
section~\ref{sec:model-result}, where we also state the parity-complete
theorem.  Its proof through sector selection and scalar or two-character
compression occupies section~\ref{sec:parity-mechanism}.  A Perron-root
derivative common to both branches is constructed in
section~\ref{sec:physical-response}.  The contrast between categorical
rigidity and odd state sensitivity is developed in
section~\ref{sec:odd-coordinate}, followed by an exact finite-size evaluation
of the odd coordinate in section~\ref{sec:exact-odd-evaluation}.
Section~\ref{sec:discussion} compares the result with neighboring frameworks,
separates portable structure from model-specific input and outlines open
directions.  Supporting proofs are collected in the appendices.

%% file: sections/model-and-result.tex
\section{Model, cut observable, and parity-complete theorem}
\label{sec:model-result}

Before stating the theorem, we fix the periodic antiferromagnetic golden-chain
ground-state family and the cut-charge measurement on each ring.
Distinguishing the resulting one-endpoint marginal from the local Hamiltonian
projector and global hoop sectors provides the common object dictionary for
both parity branches.

\subsection{Fibonacci chain and AFM ground state}
\label{sec:model-cut}

The Fibonacci fusion category has two simple objects, \(\mathbf 1\) and
\(\tau\) \cite{Kitaev2006,NayakEtAl2008}; the fusion and quantum-dimension
data used below are
\begin{equation}
  \tau\otimes\tau=\mathbf 1\oplus\tau,
  \qquad
  d_{\mathbf 1}=1,
  \qquad
  d_\tau=\varphi,
  \qquad
  \varphi=\frac{1+\sqrt5}{2},
  \qquad
  \cD^2=1+\varphi^2.
  \label{eq:fibonacci-data}
\end{equation}
Our microscopic setting is the periodic antiferromagnetic golden chain
\cite{Feiguin2007golden,Trebst2008collective,Gils2009edge,GilsArdonne2013}.
Cyclic words label an orthonormal basis of its fusion-path Hilbert space
\(\widehat{\cH}_L\),
\begin{equation}
  x=(x_1,\ldots,x_L),
  \qquad
  x_j\in\{0,1\},
  \qquad
  0\equiv\mathbf 1,\quad 1\equiv\tau,
  \label{eq:path-basis}
\end{equation}
subject to the constraint that no two cyclically adjacent letters both equal
\(0\).  With rows and columns ordered by vacuum and then \(\tau\), the same
admissibility rule is encoded by
\begin{equation}
  N=
  \begin{pmatrix}
    0&1\\
    1&1
  \end{pmatrix}.
  \label{eq:fusion-adjacency}
\end{equation}
In the real unitary gauge, with the intermediate channels in the same order,
the only nontrivial matrix-valued associator needed below is
\begin{equation}
  F^{\tau\tau\tau}_{\tau}
  =
  \begin{pmatrix}
    \varphi^{-1}&\varphi^{-1/2}\\
    \varphi^{-1/2}&-\varphi^{-1}
  \end{pmatrix}.
  \label{eq:fibonacci-F}
\end{equation}
The periodic Temperley--Lieb generators are Hermitian and obey
\(e_j^2=\varphi e_j\), so that \(P_j=e_j/\varphi\) projects locally onto the
vacuum fusion channel.  Our normalization of the antiferromagnetic
Hamiltonian is
\begin{equation}
  H_L^{\mathrm{AFM}}=-\sum_{j=1}^{L}e_j
  \label{eq:afm-hamiltonian}
\end{equation}
and \(\Omega_L\) denotes its normalized ground vector.  Its unique positive
ray is established in the next section.

The uniform theorem is stated for \(L\geq3\).  On the two-site ring, periodic
indexing makes the two neighbours of each local generator coincide, producing
an exceptional short-ring geometry.

\subsection{Fusion-edge cut and one-endpoint marginal}
\label{sec:named-cut}

Fix one fusion edge \(\ell\) in the cyclic path and call it the cut.  The charge
observable at this edge reads the fusion-path label itself and is distinct
from the local two-anyon projector \(P_j\) entering the Hamiltonian.  Resolving
the two possible labels splits the space orthogonally,
\begin{equation}
  \widehat{\cH}_L=\cH_{\mathbf 1}^{(\ell)}
  \oplus\cH_{\tau}^{(\ell)},
  \qquad
  \Pi_L\equiv\Pi_{\mathbf 1,L}^{(\ell)},
  \qquad
  \Pi_{\tau,L}^{(\ell)}=\Id-\Pi_L.
  \label{eq:named-cut-decomposition}
\end{equation}
This direct sum resolves one fusion-path edge.  By comparison, a connected
interval \(A\) of the ring defines a spatial bipartition with two boundary
edges, say \(\ell\) and \(\ell'\).  We order the endpoint pair so that \(a\)
labels \(\ell\) and \(b\) labels \(\ell'\).  The full Schmidt space is graded
by this \emph{pair} of endpoint charges
\cite{Pfeifer2014anyonicEntanglement,KatoFurrerMurao2014,
BondersonKnappPatel2017},
\begin{equation}
  \cH_A\otimes_{\mathrm{fus}}\cH_{\bar A}
  =
  \bigoplus_{a,b\in\{\mathbf 1,\tau\}}
  \cH_A^{(a,b)}\otimes\cH_{\bar A}^{(b,a)}.
  \label{eq:fusion-schmidt}
\end{equation}
The complementary interval carries the endpoint labels in reverse order;
because the Fibonacci charges are self-dual, no conjugation changes those
labels.  With \(\rho_A\) normalized by the ordinary Hilbert-space trace in
this orthogonal Schmidt decomposition, let \(\rho_A^{(a,b)}\) denote its
corresponding block.  The reduced density matrix is block diagonal in
\((a,b)\), and the complete interval-block weights are
\(p_{a,b}:=\Tr\rho_A^{(a,b)}\).  The projector
\(\Pi_{a,L}^{(\ell)}\), by contrast, resolves only the first endpoint label
and does not condition on \(\ell'\).  Its Born probability is therefore the
one-endpoint marginal
\begin{equation}
  P_a^{(\ell)}=\sum_{b\in\{\mathbf 1,\tau\}}p_{a,b}
  =
  \langle\Omega_L,\Pi_{a,L}^{(\ell)}\Omega_L\rangle.
  \label{eq:charge-probabilities}
\end{equation}
Operationally, this marginal can be obtained either by resolving only the
endpoint charge at \(\ell\) or by resolving both endpoint charges and summing
the joint probabilities over the outcome at \(\ell'\).
This marginal, rather than the two-endpoint distribution \(\{p_{a,b}\}\), is
what ``cut-charge probability'' means everywhere in this paper.  We keep the
cut fixed and henceforth suppress the superscript \((\ell)\).  For the
vacuum outcome we write
\begin{equation}
  \chi_L:=
  \langle\Omega_L,\Pi_L\Omega_L\rangle
  =P_{\mathbf 1}
  \label{eq:vacuum-cut-probability}
\end{equation}
which is the probability of finding vacuum charge at the cut.

\subsection{Hoop sectors}
\label{sec:hoop-sectors}

Wrapping a \(\tau\) loop around the periodic chain defines the nontrivial
hoop \(Y_{\tau,L}\) \cite{Pfeifer2012sectors,BuicanGromov2017,
Belletete2020fusion,BelleteteEtAl2023}.  Self-duality makes the hoop
Hermitian, the fusion rule gives its quadratic relation, and topological
naturality makes it commute with the periodic Temperley--Lieb Hamiltonian:
\begin{equation}
  Y_{\tau,L}^{\dagger}=Y_{\tau,L},
  \qquad
  Y_{\tau,L}^2=\Id+Y_{\tau,L},
  \qquad
  [Y_{\tau,L},H_L^{\mathrm{AFM}}]=0.
  \label{eq:hoop-algebra}
\end{equation}
Only \(\varphi\) and \(-\varphi^{-1}\) can therefore occur as eigenvalues,
and the associated orthogonal projectors read
\begin{equation}
  E_{+,L}=\frac{\Id+\varphi Y_{\tau,L}}{\cD^2},
  \qquad
  E_{-,L}=\Id-E_{+,L}
  =\frac{\varphi^2\Id-\varphi Y_{\tau,L}}{\cD^2}.
  \label{eq:hoop-projectors}
\end{equation}
These projectors resolve the global topological sectors:
\(E_{+,L}\) and \(E_{-,L}\) project onto the \(\varphi\) and
\(-\varphi^{-1}\) hoop eigenspaces, respectively.  The cut projector
instead resolves a fusion-edge charge; it is diagonal in fusion-path
coordinates but need not commute with the hoop.  The next section identifies
the hoop sector of the unique ground state and computes the cut-projector
compression in that sector.

\subsection{Parity-complete cut-charge theorem}
\label{sec:main-theorem}

Abbreviate
\begin{equation}
  \delta=\cD^{-2},
  \qquad
  c=1-\delta=\frac{\varphi^2}{\cD^2}.
  \label{eq:delta-c}
\end{equation}

\begin{theorem}[Parity-complete Fibonacci cut charge]
\label{thm:parity-cut-charge}
For odd \(L\), let \(B_L\) be the cut-active projection constructed in
proposition~\ref{prop:odd-cut-algebra}.  Then for every periodic \(L\geq3\)
and every choice of fusion edge as the cut,
\begin{equation}
  \chi_L=
  \begin{cases}
    \delta,&L\ \text{even},\\
    cq_L,&L\ \text{odd},
  \end{cases}
  \qquad
  q_L:=\langle\Omega_L,B_L\Omega_L\rangle\in[0,1].
  \label{eq:parity-cut-probability}
\end{equation}
Equivalently, the two charge probabilities are
\begin{equation}
  (P_{\mathbf 1},P_\tau)=
  \begin{cases}
    (\delta,c),&L\ \text{even},\\
    (cq_L,1-cq_L),&L\ \text{odd}.
  \end{cases}
  \label{eq:two-charge-probabilities}
\end{equation}
\end{theorem}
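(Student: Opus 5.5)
The plan follows the two-step mechanism announced in the introduction: first determine the hoop sector containing \(\Omega_L\), then compress the cut projector \(\Pi_L\) into that sector.

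\emph{Sector selection.} By Lemma~\ref{lem:positive-ground}, \(\Omega_L\) is unique and strictly positive in the fusion-path basis. Since \([Y_{\tau,L},H_L^{\mathrm{AFM}}]=0\) by \eqref{eq:hoop-algebra} and the ground ray is unique, \(\Omega_L\) is a hoop eigenvector. So it lies entirely in \(\Ran E_{+,L}\) or entirely in \(\Ran E_{-,L}\). To decide which, I would compute the hoop's matrix elements in the real gauge \eqref{eq:fibonacci-F}: transporting the \(\tau\) line around the ring produces, at each site, a product of \(F\)-symbol entries. I would show that every nonzero entry of the reference column \(Y_{\tau,L}\ket{x_0}\) carries sign \((-1)^L\), presumably with \(x_0=(1,\ldots,1)\). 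The sign count comes from the single negative entry \(-\varphi^{-1}\) of \(F^{\tau\tau\tau}_\tau\), which is picked up once per site along the all-\(\tau\) path.

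From this column identity, the vector \(E_{+,L}\ket{x_0}\) for even \(L\), or \(E_{-,L}\ket{x_0}\) for odd \(L\), has nonnegative entries with positive \(x_0\)-component; here one uses the explicit formulas \eqref{eq:hoop-projectors} with the sign of \(\mp\varphi\). Its overlap with the positive \(\Omega_L\) is therefore strictly positive. Sector orthogonality then forces \(\Omega_L\in\Ran E_{+,L}\) at even \(L\) and \(\Omega_L\in\Ran E_{-,L}\) at odd \(L\). I expect this sign bookkeeping of the transported column to be the main obstacle. It requires a careful convention for the hoop as an ordered product of \(F\)-moves around a periodic path, including the closing move. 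I would first check \(L=3,4\) explicitly and then induct on \(L\) via a transfer-matrix factorisation of the column.

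\emph{Compression.} With the sector known, \(\chi_L=\langle\Omega_L,E_{\pm}\Pi_LE_{\pm}\Omega_L\rangle\), so the task reduces to computing \(E_\pm\Pi_LE_\pm\). Using \(E_+=(\Id+\varphi Y)/\cD^2\), one expands \(E_+\Pi_LE_+\). Local diagrammatics should give \(\Pi_LY\Pi_L\) in terms of a loop around the cut edge: a \(\tau\) loop encircling a vacuum-labelled edge evaluates to \(\varphi\), and sliding the loop yields \(E_+\Pi_LE_+=\delta E_+\). This is the Verlinde/\(S\)-matrix statement that the vacuum-charge projector restricted to the trivial-flux sector is \(S_{\mathbf 1\mathbf 1}^2=\cD^{-2}\). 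Hence \(\chi_L=\delta\) for even \(L\), and \(P_\tau=1-\delta=c\).

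For odd \(L\), I would invoke proposition~\ref{prop:odd-cut-algebra}. It supplies a projection \(B_L\) commuting with \(E_{-,L}\), together with the decomposition \(E_-\Pi_LE_-=c\,B_L\), where the complementary character projector \(E_--B_L\) is annihilated. This is the two-character algebra generated by the compressed cut projector, whose compressed eigenvalues are \(c\) and \(0\). Then \(\chi_L=c\langle\Omega_L,B_L\Omega_L\rangle=cq_L\), and \(q_L\in[0,1]\) because \(B_L\) is an orthogonal projection and \(\Omega_L\) is normalized.

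Independence of the choice of cut follows because every edge has the same sign analysis, and because the construction of \(B_L\) is edge-covariant under translation, which commutes with \(H_L^{\mathrm{AFM}}\). The equivalent form \eqref{eq:two-charge-probabilities} then follows from \(P_\tau=1-\chi_L\).
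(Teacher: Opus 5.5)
\textbf{Overall.} Your sector-selection step is the paper's argument, and your odd branch is fine once Proposition~\ref{prop:odd-cut-algebra} is taken as given. The gap is in the even branch, at the identity \(E_+\Pi E_+=\delta E_+\).

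\textbf{A correction to the sign count.} Along the all-\(\tau\) input column, the factor \(-\varphi^{-1}\) is not picked up once per site. It appears once per cyclically adjacent \(\tau\tau\) pair of the output word \(y\), that is \(N_{\tau\tau}(y)=L-2z(y)\) times. The sign is \((-1)^L\) only because each isolated vacuum removes exactly two such pairs.

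\textbf{The gap in the even branch.} The local fact you anticipate is not the operative one. The hoop runs along the ring and crosses the cut, sending a label \(x\) to some \(x'\in x\otimes\tau\). It therefore has no vacuum-to-vacuum block: \(\Pi Y_\tau\Pi=0\). A \(\tau\) loop that evaluates to \(\varphi\) on a vacuum edge is a loop \emph{encircling} the cut edge. That is a different, diagonal operator, \(M:=\varphi\Pi-\varphi^{-1}(\Id-\Pi)\), with \(\Pi=\cD^{-2}(\Id+\varphi M)\).

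More importantly, even the correct local fact \(\Pi Y_\tau\Pi=0\) yields only the reverse sandwich \(\Pi E_+\Pi=\delta\Pi\). This says \(\lVert E_+v\rVert^2=\delta\) for every unit vector \(v\in\Ran\Pi\). It does not exclude a nonzero subspace \(\Ran E_+\cap\ker\Pi\). In general one only gets \(E_+\Pi E_+=\delta P_S\), where \(P_S\) projects onto \(S=\Ran(E_+\Pi)\subseteq\Ran E_+\). A ground state with a component in \(\Ran E_+\ominus S\) would give \(\chi_L<\delta\). No local loop sliding rules this out; it is a global dimension statement.

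\textbf{The missing ingredient.} What you need is the rank identity \(\operatorname{rank}E_+=\dim\cH_{\mathbf 1}^{(\ell)}\). The paper gets it from \(\Tr Y_{\tau,L}=(-\varphi^{-1})^L\), since only the all-\(\tau\) word contributes to the diagonal. Together with \(\Tr N^L=\varphi^L+(-\varphi^{-1})^L\), this gives \(\Tr E_+=(N^L)_{00}\). With equal ranks, \(\Pi\) restricted to \(\Ran E_+\) is a bijection onto \(\cH_{\mathbf 1}^{(\ell)}\). Hence \(\Ran E_+\) is the graph of \(\varphi V\) with \(V\) an isometry, and the block form \eqref{eq:positive-graph-projector} gives both sandwiches.

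\textbf{On your Verlinde intuition.} The reading \(S_{\mathbf 1\mathbf 1}^2=\cD^{-2}\) is the right intuition. Given the expansion of \(\Pi\) above, the forward sandwich is equivalent to \(E_+ME_+=0\): the encircling loop must carry the trivial-flux sector entirely into \(\Ran E_-\). On the \(L\)-punctured ring, however, that operator statement is exactly what requires the global count. Alternatively it would need a full identification of the path space with a punctured-torus TQFT space, in which \(\Ran E_+\) and \(\Ran\Pi\) are both copies of \(\mathrm{Hom}(\mathbf 1,\tau^{\otimes L})\). As written, the even branch is asserted rather than proved.

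Note also that the reverse sandwich is precisely the input to Proposition~\ref{prop:odd-cut-algebra}. If you proved that proposition yourself, you would need \(\Pi Y_\tau\Pi=0\) there, not the \(\varphi\)-evaluation you describe.
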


At even length the theorem fixes the two charge probabilities.  At odd
length it reduces the exact response to the ground-state Born coordinate
\(q_L\).  Section~\ref{sec:parity-mechanism} proves both statements by first
selecting the physical hoop sector and then compressing the cut projector.
Section~\ref{sec:physical-response} realizes both branches through one
Perron-root derivative, section~\ref{sec:odd-coordinate} identifies the
information retained
by each compressed algebra, and section~\ref{sec:exact-odd-evaluation}
provides exact finite-dimensional access to the odd coordinate.

%% file: sections/parity-mechanism.tex
\section{Parity-selected sectors and cut compression}
\label{sec:parity-mechanism}

The parity dependence in theorem~\ref{thm:parity-cut-charge} follows from
sector selection and cut compression in that causal order.  The state input
is the unique strictly positive antiferromagnetic ground ray in the
fusion-path basis.  The operator input is the parity sign of a particular hoop
column, obtained by applying \(Y_{\tau,L}\) to the all-\(\tau\) word.
Simplicity and hoop commutation place the ground ray in one hoop eigenspace,
while this column identity and positivity determine which eigenspace contains
it.  The projectors \(\Pi_L\) form a length-indexed family defined by the same
cut measurement rule.  After sector selection, their compression is
scalar in the positive sector and decomposes into two character components in
the complementary sector.  We establish these statements in that order.

\subsection{Positive AFM ground ray and hoop-column parity sign}
\label{sec:parity-selection}

\begin{lemma}[Positive antiferromagnetic ground ray]
\label{lem:positive-ground}
For \(L\geq3\), \(H_L^{\mathrm{AFM}}\) has a unique ground ray, represented
by a vector that is strictly positive in the fusion-path basis.
\end{lemma}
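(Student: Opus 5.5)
The plan is to apply the Perron--Frobenius theorem to the matrix \(M_L:=\sum_{j=1}^{L}e_j+\lambda\Id\) for a suitable constant \(\lambda\geq0\), written in the fusion-path basis. Since \(H_L^{\mathrm{AFM}}=-\sum_j e_j\), the ground ray of \(H_L^{\mathrm{AFM}}\) is the top eigenvector of \(\sum_j e_j\). Adding \(\lambda\Id\) changes neither eigenvectors nor their ordering. It therefore suffices to show that \(M_L\) is entrywise nonnegative and irreducible. Perron--Frobenius then gives a simple largest eigenvalue with a strictly positive eigenvector, which is exactly the statement of lemma~\ref{lem:positive-ground}.

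\emph{Nonnegativity.} First I write \(e_j\) explicitly. In the fusion-path basis, \(e_j\) acts only on the letter \(x_j\) between the neighbours \(x_{j-1},x_{j+1}\) (indices cyclic; here \(L\geq3\) ensures \(x_{j-1},x_{j+1},x_j\) are three distinct positions). Using \(P_j=e_j/\varphi\) and the \(F\)-matrix \eqref{eq:fibonacci-F}, the standard golden-chain formulas are as follows. If a neighbour is \(\mathbf 1\), then \(x_j=\tau\) is forced and \(e_j\) acts as a diagonal entry \(0\) or \(\varphi\) (\(\varphi\) exactly when both neighbours are \(\mathbf 1\), which is excluded by admissibility, so in practice \(0\)). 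When both neighbours are \(\tau\), \(e_j\) acts on \(x_j\in\{\mathbf 1,\tau\}\) by \(\varphi\,v v^{T}\) with \(v=(F^{\tau\tau\tau}_{\tau})_{\mathbf 1,\cdot}=(\varphi^{-1},\varphi^{-1/2})\). This gives the entries
\[
\begin{pmatrix}\varphi^{-1}&\varphi^{-1/2}\\ \varphi^{-1/2}&\varphi^{-2}\end{pmatrix}.
\]
All off-diagonal entries are therefore \(\varphi^{-1/2}>0\), and the diagonal entries are nonnegative. So \(\sum_j e_j\) is already entrywise nonnegative in this real gauge, and \(\lambda=0\) would suffice for nonnegativity; I take \(\lambda=1\) to make the diagonal strictly positive, which gives aperiodicity for free. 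The sign of the off-diagonal element is the key point. It is fixed by the chosen real gauge of \eqref{eq:fibonacci-F}, whose first row is positive.

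\emph{Irreducibility.} I will show that the graph on admissible cyclic words, with an edge whenever some \(e_j\) has a nonzero off-diagonal entry, is connected. An edge exists exactly when one flips a single letter \(x_j\) whose two cyclic neighbours are both \(\tau\). Every admissible word can be connected to the all-\(\tau\) word \((1,\ldots,1)\) by flipping its \(\mathbf 1\) letters to \(\tau\) one at a time. This is legitimate because each \(\mathbf 1\) letter has both neighbours equal to \(\tau\) by admissibility, and turning one letter into \(\tau\) never breaks that property for the remaining \(\mathbf 1\)'s. Hence the graph is connected and \(M_L\) is irreducible. Perron--Frobenius then yields uniqueness of the top eigenvalue of \(\sum_j e_j\), hence a unique ground ray of \(H_L^{\mathrm{AFM}}\), with a strictly positive representative \(\Omega_L\).

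The main obstacle I expect is the sign bookkeeping in the first step. The off-diagonal matrix element of \(e_j\) must be checked to be positive in the specific real unitary gauge used by the paper, including how the periodic identification treats the closing bond. A gauge with \(-\varphi^{-1/2}\) would require a diagonal sign change, and on a ring such a change might not be globally consistent. I would verify this directly from \(e_j=\varphi\,(F^{\tau\tau\tau}_{\tau})^{T}\,\mathrm{diag}(1,0)\,F^{\tau\tau\tau}_{\tau}\) in the stated gauge, which makes the entry \(\varphi\cdot\varphi^{-1}\varphi^{-1/2}>0\) uniformly in \(j\), including \(j=L\).
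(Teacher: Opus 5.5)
Your proposal is correct and is essentially the paper's own proof. Both arguments use nonnegativity of the off-diagonal entries of \(e_j\) in the fixed real gauge, connectivity of the configuration graph by flipping each isolated vacuum label to \(\tau\) until the all-\(\tau\) word is reached, and Perron--Frobenius applied to a scalar shift of \(-H_L^{\mathrm{AFM}}\). Two slips are harmless because the argument uses only off-diagonal signs and connectivity: the \(\tau\tau\tau\) diagonal entry of \(e_j\) is \(1\), not \(\varphi^{-2}\) (the block \(\varphi vv^{\mathsf T}\) must have trace \(\varphi\)), and for \(L\geq4\) the pattern \(\mathbf 1\tau\mathbf 1\) is admissible and gives diagonal entry \(\varphi\).
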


\begin{proof}
In every nontrivial local two-state block, the off-diagonal entries of
\(-e_j\) are strictly negative.  The corresponding configuration graph is
connected.  An isolated vacuum label, necessarily flanked by two \(\tau\)
labels, can be replaced by \(\tau\).  Repeating this move reaches the
all-\(\tau\) word, and reversibility connects every admissible word.  A scalar
shift of \(-H_L^{\mathrm{AFM}}\) is therefore an irreducible nonnegative
matrix.
Perron--Frobenius gives a simple strictly positive maximal eigenvector,
which is the ground vector of \(H_L^{\mathrm{AFM}}\).
\end{proof}

The column sign can be isolated on one reference word.  Write
\begin{equation}
  \ket{T}:=\ket{\tau\cdots\tau}.
  \label{eq:all-tau-state}
\end{equation}
This basis word is admissible at every length.  It identifies the hoop column
evaluated below, whose sector projections provide the strictly positive test
vectors used in the ground-state comparison.

\begin{lemma}[Parity sign of the hoop column]
\label{lem:hoop-column-sign}
For every admissible output word \(y\),
\begin{equation}
  \operatorname{sgn}\bra{y}Y_{\tau,L}\ket{T}=(-1)^L,
  \qquad
  \bra{y}Y_{\tau,L}\ket{T}\neq0.
  \label{eq:hoop-column-sign}
\end{equation}
\end{lemma}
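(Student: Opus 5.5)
The plan is to write the hoop column \(Y_{\tau,L}\ket{T}\) explicitly as a product of \(F\)-symbol entries and then reduce the sign to counting adjacent pairs in the output word. First, I would use the standard fusion-path formula for the \(\tau\) loop on the periodic chain \cite{Pfeifer2012sectors,Belletete2020fusion}. A \(\tau\) line placed parallel to the chain is fused into each path edge in turn, using one \(F\)-move per anyon, and the loop is closed around the ring. In the real unitary gauge this gives
\begin{equation*}
  \bra{y}Y_{\tau,L}\ket{x}
  =\prod_{j=1}^{L}\bigl(F^{\tau x_j\tau}_{y_{j+1}}\bigr)^{x_{j+1}}_{y_j},
\end{equation*}
up to strictly positive quantum-dimension normalizations, with indices read cyclically. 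I expect this to be the main obstacle. The index conventions, and the possible extra factors \(\sqrt{d_a d_b}\) arising from the loop closure, have to be matched to the paper's gauge. I would verify the formula against the relations \eqref{eq:hoop-algebra}, or on \(L=3\) directly. The point to confirm is that every extra factor is a positive real number, so that it cannot affect the sign.

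Second, I would specialize to the input word \(x=T\), where \(x_j=\tau\) for every \(j\). Every factor then becomes an entry of the single matrix \(F^{\tau\tau\tau}_\tau\) in \eqref{eq:fibonacci-F}, with row and column indices given by the cyclically adjacent output labels \((y_j,y_{j+1})\). The relevant entries are
\begin{itemize}
  \item \(\varphi^{-1}>0\) for the pair \((\mathbf 1,\mathbf 1)\), which never occurs because \(y\) is admissible;
  \item \(\varphi^{-1/2}>0\) for the mixed pairs \((\mathbf 1,\tau)\) and \((\tau,\mathbf 1)\);
  \item \(-\varphi^{-1}<0\) for the pair \((\tau,\tau)\).
\end{itemize}
All these entries are nonzero, so \(\bra{y}Y_{\tau,L}\ket{T}\neq0\) for every admissible \(y\). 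The sign of the product is \((-1)^{m(y)}\), where \(m(y)\) is the number of cyclically adjacent \(\tau\tau\) pairs in \(y\).

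Third, a counting argument finishes the proof. The ring has \(L\) cyclic adjacent pairs, and each is either \(\tau\tau\) or mixed, since \(\mathbf 1\mathbf 1\) is excluded by admissibility. For \(L\geq3\), each vacuum label \(y_j=\mathbf 1\) is flanked by two \(\tau\) labels, so it lies in exactly two mixed pairs. Distinct vacua give distinct mixed pairs, because no two vacua are adjacent. Hence the number of mixed pairs equals \(2n_{\mathbf 1}(y)\), where \(n_{\mathbf 1}(y)\) is the number of vacuum labels. It follows that \(m(y)=L-2n_{\mathbf 1}(y)\equiv L \pmod 2\), which gives \(\operatorname{sgn}\bra{y}Y_{\tau,L}\ket{T}=(-1)^L\).

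The restriction \(L\geq3\) enters only in the third step. There it guarantees that the two neighbours of a vacuum are distinct sites, avoiding the degenerate pairing of the two-site ring noted after \eqref{eq:afm-hamiltonian}.
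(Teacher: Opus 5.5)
Your proposal is correct and is essentially the paper's proof: for fixed output \(y\), a single product of local \(F\)-factors survives; its only negative factor \(-\varphi^{-1}\) occurs at each cyclic \(\tau\tau\) adjacency, and admissibility gives the count \(N_{\tau\tau}(y)=L-2z(y)\). One harmless slip: specializing your displayed formula to \(x=T\) gives the factor \(\bigl(F^{\tau\tau\tau}_{y_{j+1}}\bigr)^{\tau}_{y_j}\), so a \((\tau,\mathbf 1)\) pair contributes the \(1\times1\) block \(F^{\tau\tau\tau}_{\mathbf 1}=1\) rather than \(\varphi^{-1/2}\) (and a \((\mathbf 1,\mathbf 1)\) pair would give \(0\), not \(\varphi^{-1}\)); every non-\(\tau\tau\) factor is still strictly positive, so the sign and nonvanishing conclusions stand.
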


\begin{proof}
The only negative local factor in this column is the
\(-\varphi^{-1}\) entry of \eqref{eq:fibonacci-F}, associated with an
adjacent \(\tau\tau\) edge.  If \(z(y)\) counts the vacuum labels in \(y\),
cyclic admissibility gives
\begin{equation}
  N_{\tau\tau}(y)=L-2z(y).
  \label{eq:tau-tau-count}
\end{equation}
Fix an admissible output word \(y\).  The matrix element is a finite sum over
the intermediate labels of the hoop contraction with input \(\ket{T}\), and
those labels are precisely the components of the output word.  Fixing \(y\)
therefore leaves no free summation index, so exactly one contribution
survives, and admissibility of \(y\) makes each of its local factors nonzero.
Its sole negative local factor occurs once at each of the
\(N_{\tau\tau}(y)\) adjacent \(\tau\tau\) edges of \(y\) and nowhere else,
while every remaining local factor is positive.  That contribution, and hence
the matrix element, is therefore nonzero with sign
\((-1)^{N_{\tau\tau}(y)}=(-1)^L\).  The argument is uniform in \(L\) and in
\(y\).
\end{proof}

\subsection{Physical sector selection}
\label{sec:physical-sector-selection}

The sector argument combines simplicity with the column sign.  Simplicity
and hoop commutation place \(\Omega_L\) in one of the two orthogonal hoop
eigenspaces.  At each parity the column sign and the projector formulas
produce a strictly positive vector in one of those eigenspaces.  Its overlap
with the strictly positive ground vector is nonzero, so orthogonality excludes
the other sector assignment.

\begin{proposition}[Physical hoop sector]
\label{prop:physical-hoop-sector}
For every \(L\geq3\),
\begin{equation}
  \Omega_L\in
  \begin{cases}
    \Ran E_{+,L},&L\ \text{even},\\
    \Ran E_{-,L},&L\ \text{odd}.
  \end{cases}
  \label{eq:parity-sector-selection}
\end{equation}
\end{proposition}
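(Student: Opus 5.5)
The plan follows the outline stated just before the proposition: use simplicity to place \(\Omega_L\) in one hoop eigenspace, then use the hoop-column sign to exhibit a strictly positive test vector in the sector that is claimed to contain it.

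\textbf{Step 1: the ground ray lies in a single sector.} By lemma~\ref{lem:positive-ground} the ground eigenspace of \(H_L^{\mathrm{AFM}}\) is one-dimensional. By \eqref{eq:hoop-algebra}, \(Y_{\tau,L}\) commutes with the Hamiltonian, so \(Y_{\tau,L}\) preserves this eigenspace. Hence \(\Omega_L\) is an eigenvector of \(Y_{\tau,L}\), and it lies either in \(\Ran E_{+,L}\) or in \(\Ran E_{-,L}\). These ranges are orthogonal because \(E_{\pm,L}\) are complementary orthogonal projectors.

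\textbf{Step 2: a strictly positive test vector in the target sector.} Define
\[
v_{+}:=\cD^2 E_{+,L}\ket{T}=\ket{T}+\varphi Y_{\tau,L}\ket{T}
\quad(L\text{ even}),\qquad
v_{-}:=-\cD^2E_{-,L}\ket{T}=\varphi Y_{\tau,L}\ket{T}-\varphi^2\ket{T}
\quad(L\text{ odd}).
\]
For even \(L\), lemma~\ref{lem:hoop-column-sign} gives \(\bra{y}Y_{\tau,L}\ket{T}>0\) for every admissible \(y\). Adding \(\ket{T}\), which has nonnegative entries, leaves every entry of \(v_+\) strictly positive. For odd \(L\), every entry \(\bra{y}Y_{\tau,L}\ket{T}\) is strictly negative. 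The only delicate entry is \(y=T\), where the subtracted term \(\varphi^2\) enters with a sign that could in principle compete.

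It is simplest to take instead the vector \(-E_{-,L}\ket{T}\) and recheck its \(T\)-entry:
\[
\bra{T}\bigl(-E_{-,L}\bigr)\ket{T}=\frac{-\varphi^2+\varphi\bra{T}Y_{\tau,L}\ket{T}}{\cD^2}.
\]
This is a sum of two negative terms, so the \(T\)-entry is negative. For \(y\neq T\),
\[
\bra{y}\bigl(-E_{-,L}\bigr)\ket{T}=\frac{\varphi\bra{y}Y_{\tau,L}\ket{T}}{\cD^2}<0,
\]
since \(E_{-,L}=(\varphi^2\Id-\varphi Y_{\tau,L})/\cD^2\). So the sign convention must be fixed once more: \(E_{-,L}\ket{T}\) itself is strictly positive. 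Its \(y\neq T\) entries equal \(-\varphi\bra{y}Y\ket{T}/\cD^2>0\), and its \(T\)-entry equals \(\bigl(\varphi^2-\varphi\bra{T}Y\ket{T}\bigr)/\cD^2>0\).

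In both parities, then, the vector \(E_{\pm,L}\ket{T}\), with the sign selected by \((-1)^L\), is strictly positive in the fusion-path basis and lies in the claimed sector.

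\textbf{Step 3: conclude by orthogonality.} Since \(\Omega_L\) is also strictly positive, its inner product with the test vector is strictly positive, hence nonzero. If \(\Omega_L\) lay in the opposite sector, orthogonality of \(\Ran E_{+,L}\) and \(\Ran E_{-,L}\) would force this inner product to vanish. This contradiction excludes the opposite sector, and Step 1 then places \(\Omega_L\) in the claimed one.

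\textbf{Main obstacle.} The only real difficulty is the sign bookkeeping in Step 2. Off the diagonal, the entries are strictly signed by lemma~\ref{lem:hoop-column-sign}. On the diagonal entry \(y=T\), the identity contribution must reinforce rather than oppose the hoop contribution. It does so in both parities because the hoop coefficient in \(E_{+,L}\) is \(+\varphi\) and in \(E_{-,L}\) is \(-\varphi\). That is exactly why the column sign \((-1)^L\) pairs with \(E_{+,L}\) when \(L\) is even and with \(E_{-,L}\) when \(L\) is odd.
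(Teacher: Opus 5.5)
Your proposal is correct and follows the paper's proof essentially step for step. It uses simplicity plus hoop commutation to place \(\Omega_L\) in one eigenspace, the strictly positive test vector \(E_{+,L}\ket{T}\) (even \(L\)) or \(E_{-,L}\ket{T}\) (odd \(L\)) obtained from lemma~\ref{lem:hoop-column-sign}, and orthogonality to exclude the other sector. The sign slip in your first definition of \(v_-\) is harmless: you correct it, and a strictly negative test vector would give a nonzero overlap with \(\Omega_L\) just as well.
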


\begin{proof}
Equation~\eqref{eq:hoop-algebra} and the simplicity in
lemma~\ref{lem:positive-ground} place \(\Omega_L\) in exactly one hoop
eigenspace.  If \(L\) is even, lemma~\ref{lem:hoop-column-sign} makes
\(Y_{\tau,L}\ket{T}\) componentwise positive, hence
\(E_{+,L}\ket{T}\) is strictly positive.  A positive \(\Omega_L\) cannot be
orthogonal to that vector, so it cannot lie in \(\Ran E_{-,L}\).

If \(L\) is odd, the same hoop column is componentwise negative.  The
projector formula \eqref{eq:hoop-projectors} gives
\begin{equation}
  E_{-,L}\ket{T}
  =
  \cD^{-2}
  \bigl(\varphi^2\Id-\varphi Y_{\tau,L}\bigr)\ket{T},
  \label{eq:odd-positive-test-vector}
\end{equation}
which is strictly positive.  Orthogonality now excludes
\(\Omega_L\in\Ran E_{+,L}\).
\end{proof}

Equation~\eqref{eq:hoop-column-sign} is a matrix-column identity in the real
fusion-path gauge fixed above.  Combined with strict positivity in the same
basis, it gives the sector assignment above.  Across the length-indexed AFM
family, parity determines whether
\(\Omega_L\) belongs to \(\Ran E_{+,L}\) or \(\Ran E_{-,L}\).  Each \(L\)
has its own finite Hamiltonian and cut projector on \(\widehat{\cH}_L\).  The
Hamiltonian prescription, cut measurement rule and charge alphabet remain
fixed across the family.  With the sector selected, we now determine the
compression of the cut projector.

\subsection{Scalar compression at even length}
\label{sec:even-scalar-compression}

We first analyze the positive sector independently of which parity selects
it.  Suppress the length and cut labels temporarily and write
\(\Pi_0=\Pi_L\), \(\Pi_\tau=\Id-\Pi_0\).  Relative to
\(\widehat{\cH}_L=\cH_{\mathbf 1}^{(\ell)}\oplus
\cH_\tau^{(\ell)}\), set
\begin{equation}
  A=\Id+\varphi Y_\tau=\cD^2E_+.
  \label{eq:A-definition}
\end{equation}
The local input is the vacuum corner of \(A\).  A hoop crossing a vacuum cut
changes the local channel from \(\mathbf 1\) to \(\tau\), so the hoop has no
vacuum-to-vacuum block and only the identity term remains:
\begin{equation}
  A_{00}=\Id_{\cH_{\mathbf 1}^{(\ell)}}.
  \label{eq:local-vacuum-block}
\end{equation}

A global rank identity completes this local block information.
The trace and rank calculation in
appendix~\ref{app:projection-geometry} shows that the positive sector has
the same dimension as \(\cH_{\mathbf 1}^{(\ell)}\).  Together with the
vacuum-corner identity, this fixes the graph-projector geometry used in the
next proposition.

\begin{proposition}[Scalar cut-projector compression in the positive hoop sector]
\label{prop:positive-sector-cut}
For every \(L\) and every cut position,
\begin{align}
  E_+\Pi_0E_+&=\delta E_+,
  &
  E_+\Pi_\tau E_+&=cE_+,
  \label{eq:positive-sector-sandwich}
  \\
  \Pi_0E_+\Pi_0&=\delta\Pi_0.
  \label{eq:reverse-sandwich}
\end{align}
\end{proposition}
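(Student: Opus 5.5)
The plan is to read everything off a block decomposition of \(A=\Id+\varphi Y_\tau=\cD^2E_+\) relative to \(\widehat{\cH}_L=\cH_{\mathbf 1}^{(\ell)}\oplus\cH_\tau^{(\ell)}\). The two inputs are the vacuum-corner identity \eqref{eq:local-vacuum-block} and the rank identity of appendix~\ref{app:projection-geometry}, \(\operatorname{rank}E_+=\dim\cH_{\mathbf 1}^{(\ell)}\). Since \(Y_\tau\) is Hermitian, \(A\) is Hermitian, and \eqref{eq:local-vacuum-block} lets me write
\begin{equation*}
  A=\begin{pmatrix}\Id & X^{\dagger}\\ X & Z\end{pmatrix},
\end{equation*}
with \(X:\cH_{\mathbf 1}^{(\ell)}\to\cH_\tau^{(\ell)}\) and \(Z\) Hermitian on \(\cH_\tau^{(\ell)}\). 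Because \(E_+\) is a projector, \(A^2=\cD^2A\).

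The first step is to show that \(A\) has graph form. The vacuum corner \(\Id\) is invertible, so the rank of \(A\) equals \(\dim\cH_{\mathbf 1}^{(\ell)}\) plus the rank of the Schur complement \(Z-XX^\dagger\). The rank identity forces this Schur complement to vanish, so \(Z=XX^\dagger\). Setting
\begin{equation*}
  V=\begin{pmatrix}\Id\\ X\end{pmatrix}:\cH_{\mathbf 1}^{(\ell)}\to\widehat{\cH}_L,
\end{equation*}
I then have \(A=VV^\dagger\), and \(V\) is injective. Substituting into \(A^2=\cD^2A\) gives \(V(V^\dagger V)V^\dagger=\cD^2VV^\dagger\). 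Cancelling \(V\) on the left and \(V^\dagger\) on the right, which is legitimate since \(V\) is injective and \(V^\dagger\) is surjective, yields
\begin{equation*}
  V^\dagger V=\Id+X^\dagger X=\cD^2\,\Id_{\cH_{\mathbf 1}^{(\ell)}}.
\end{equation*}
So \(\cD^{-1}V\) is an isometry onto \(\Ran E_+\), and \(E_+=\cD^{-2}VV^\dagger\) is the graph projector of \(X\).

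The identities now follow by direct computation. For \eqref{eq:reverse-sandwich}, \(\Pi_0E_+\Pi_0=\cD^{-2}A_{00}\oplus 0=\delta\Pi_0\). For the first identity in \eqref{eq:positive-sector-sandwich}, the key fact is \(V^\dagger\Pi_0V=\Id\), because \(\Pi_0\) keeps only the top component of \(V\). Hence
\begin{equation*}
  E_+\Pi_0E_+=\cD^{-4}V(V^\dagger\Pi_0V)V^\dagger=\cD^{-4}VV^\dagger=\cD^{-2}E_+=\delta E_+ .
\end{equation*}
The \(\Pi_\tau\) identity follows from \(\Pi_\tau=\Id-\Pi_0\) and \(E_+^2=E_+\), giving \(E_+\Pi_\tau E_+=(1-\delta)E_+=cE_+\). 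Nothing in this argument depends on \(L\) or on the cut position beyond the two inputs.

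The main obstacle is the two inputs rather than the algebra. The vacuum-corner identity requires checking in the fixed real gauge that \(Y_\tau\) has no vacuum-to-vacuum matrix elements at \(\ell\); I would verify this from the hoop contraction, where crossing a vacuum edge forces the output label there to be \(\tau\). The rank identity is taken from the appendix. I expect that computation to run as follows: \(\operatorname{rank}E_+=\Tr E_+=\cD^{-2}\bigl(\dim\widehat{\cH}_L+\varphi\Tr Y_\tau\bigr)\), which one compares with the Fibonacci/Lucas count of cyclic words having a vacuum label at \(\ell\).
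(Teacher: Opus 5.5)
Your proposal is correct and is essentially the paper's own argument. The vacuum-corner identity together with the rank identity $\operatorname{rank}E_+=\dim\cH_{\mathbf 1}^{(\ell)}$ forces $\Ran E_+$ to be the graph of a map $X$ with $X^\dagger X=\varphi^2\Id$, and block multiplication then gives all three identities. The differences are only in how you get there: you use the Schur complement of $A$ and fix $X^\dagger X$ from $A^2=\cD^2A$, while the paper uses bijectivity of the coordinate projection $\Ran E_+\to\cH_{\mathbf 1}^{(\ell)}$ and reads $T^\dagger T$ off the upper-left block $(\Id+T^\dagger T)^{-1}=\delta\Id$ of the graph projector.
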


\begin{proof}
Equations~\eqref{eq:local-vacuum-block} and the rank identity proved in
appendix~\ref{app:projection-geometry} make \(\Ran E_+\) the graph of an
isometry scaled by \(\varphi\).  The resulting orthogonal graph projector
gives \eqref{eq:positive-sector-sandwich} and \eqref{eq:reverse-sandwich}.  The
complete block calculation is recorded there.
\end{proof}

This is a sector statement valid at every length and for every cut position.
For even \(L\), proposition~\ref{prop:physical-hoop-sector} places the
physical state in \(\Ran E_+\).  Taking its expectation in
\eqref{eq:positive-sector-sandwich} gives
\begin{equation}
  P_{\mathbf 1}=\frac1{\cD^2},
  \qquad
  P_\tau=\frac{\varphi^2}{\cD^2},
  \qquad
  \frac{P_\tau}{P_{\mathbf 1}}=\varphi^2.
  \label{eq:even-fibonacci-weights}
\end{equation}
Because the compressed operator is scalar, no coordinate of a normalized
state inside the selected positive sector affects these probabilities.
After physical sector selection, the even response is therefore fixed by the
Fibonacci categorical coefficients.  The same ratio defines the exact
finite-lattice comparison quantity
\begin{equation}
  \ln g_{\mathrm{cut}}
  :=
  \frac12\ln\frac{P_\tau}{P_{\mathbf 1}}
  =
  \ln\varphi.
  \label{eq:operational-g-cut}
\end{equation}
This finite-lattice cut ratio supplies a concrete target for a continuum
comparison once a scaling limit and boundary-state dictionary are specified;
the identity established here concerns the finite periodic chain.

\subsection{Two-character compression at odd length}
\label{sec:odd-two-characters}

At odd length the selected ground state lies in the complementary sector,
where the compression of the same cut projector decomposes into two
projector components rather than being scalar.  Keep the abbreviations of the
previous subsection and set \(K=E_+\), \(E=E_-=\Id-K\), and \(\Pi=\Pi_0\).
Define
\begin{equation}
  B=c^{-1}E\Pi E,
  \qquad
  P_{\mathrm{ex}}=E-B.
  \label{eq:odd-character-projectors}
\end{equation}
The following proposition identifies \(B\) as the cut-active component and
\(P_{\mathrm{ex}}\) as its orthogonal complement inside the sector.

\begin{proposition}[Odd two-character cut algebra]
\label{prop:odd-cut-algebra}
The operators \(B\) and \(P_{\mathrm{ex}}\) are orthogonal projections.
Moreover,
\begin{equation}
  E\Pi E=cB,
  \qquad
  E=B+P_{\mathrm{ex}},
  \label{eq:odd-sector-compression}
\end{equation}
and
\begin{equation}
  \operatorname{rank}B=\operatorname{Fib}_{L-1},
  \qquad
  \operatorname{rank}P_{\mathrm{ex}}=\operatorname{Fib}_{L}.
  \label{eq:odd-character-ranks}
\end{equation}
For odd \(L\geq3\), both characters are nonzero.
\end{proposition}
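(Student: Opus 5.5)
The plan is to derive everything from the reverse sandwich identity \eqref{eq:reverse-sandwich}, \(\Pi K\Pi=\delta\Pi\), proved in proposition~\ref{prop:positive-sector-cut}. That proposition holds for every length and every cut, so nothing parity-specific is needed until the final nonvanishing claim. The key step is to compute the square of the compression:
\begin{equation*}
  (E\Pi E)^2=E\Pi(\Id-K)\Pi E=E\Pi E-E(\Pi K\Pi)E=(1-\delta)E\Pi E=cE\Pi E .
\end{equation*}
Hence \(B=c^{-1}E\Pi E\) satisfies \(B^2=B\). It is Hermitian because \(E\) and \(\Pi\) are, so \(B\) is an orthogonal projection. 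Since \(B=EB=BE\), we also get \(P_{\mathrm{ex}}^2=E-2B+B=P_{\mathrm{ex}}\) and \(BP_{\mathrm{ex}}=0\). Therefore \(P_{\mathrm{ex}}\) is the orthogonal complement of \(B\) inside \(\Ran E\). Both identities in \eqref{eq:odd-sector-compression} then hold by definition.

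For the ranks, I will use traces. Cyclicity and \(E=\Id-K\) give
\begin{equation*}
  \operatorname{rank}B=\Tr B=c^{-1}\bigl(\Tr\Pi-\Tr(\Pi K\Pi)\bigr)=c^{-1}(1-\delta)\Tr\Pi=\dim\cH_{\mathbf 1}^{(\ell)} .
\end{equation*}
Next comes a counting step. Words with a vacuum label at \(\ell\) force \(\tau\) on both neighbouring edges. The remaining labels form an open admissible string, and a transfer-matrix count with \(N\) gives \(\operatorname{Fib}_{L-1}\) such words. I will check the normalization at \(L=3\), where the single word \((0,1,1)\) matches \(\operatorname{Fib}_2=1\). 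The total dimension is \(\Tr N^L=\operatorname{Fib}_{L-1}+\operatorname{Fib}_{L+1}\), which equals \(4\) at \(L=3\).

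For the complement, appendix~\ref{app:projection-geometry} gives \(\operatorname{rank}K=\dim\cH_{\mathbf 1}^{(\ell)}=\operatorname{Fib}_{L-1}\). Hence \(\operatorname{rank}E=\operatorname{Fib}_{L+1}\), and
\begin{equation*}
  \operatorname{rank}P_{\mathrm{ex}}=\operatorname{Fib}_{L+1}-\operatorname{Fib}_{L-1}=\operatorname{Fib}_L .
\end{equation*}
For \(L\geq3\) both \(\operatorname{Fib}_{L-1}\) and \(\operatorname{Fib}_L\) are at least \(1\), so both characters are nonzero. The parity restriction in the statement is therefore only about which sector is physical, not about the algebra itself.

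The main obstacle is the bookkeeping in the rank step. I need to match the Fibonacci indexing convention with the cyclic-word count. I also need to confirm that the appendix rank identity \(\operatorname{rank}E_+=\dim\cH_{\mathbf 1}^{(\ell)}\) holds for odd \(L\) and for every cut position, not just in the even case where it was used. If that were in doubt, I would instead compute \(\Tr E_+=(\Tr\Id+\varphi\Tr Y_\tau)/\cD^2\) directly, using \(\Tr Y_{\tau}=\operatorname{Fib}_{L-1}\varphi^{L}\)-type character formulas. The algebraic part, by contrast, is a two-line consequence of \eqref{eq:reverse-sandwich}.
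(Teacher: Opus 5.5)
Your proof is correct. Its algebraic half is exactly the paper's: you derive \((E\Pi E)^2=E\Pi(\Id-K)\Pi E=cE\Pi E\) from the reverse sandwich \(\Pi K\Pi=\delta\Pi\), and then argue that \(P_{\mathrm{ex}}\) is the complement of \(B\) inside \(\Ran E\). You diverge at the rank of \(B\). The paper builds the partial isometry \(\mathcal V_L=(c\delta)^{-1/2}K\Pi E\) and checks \(\mathcal V_L^\dagger\mathcal V_L=B\) and \(\mathcal V_L\mathcal V_L^\dagger=K\), which uses both sandwich identities of proposition~\ref{prop:positive-sector-cut}. It then concludes \(\operatorname{rank}B=\operatorname{rank}K=(N^L)_{00}\). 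You instead take the trace, \(\Tr B=c^{-1}\Tr(\Pi E\Pi)=\Tr\Pi\). This needs only the reverse sandwich and lands directly on \(\dim\cH_{\mathbf 1}^{(\ell)}\). Your route is more economical for the number itself. The paper's route gives more structure: an explicit unitary identification of \(\Ran B\) with \(\Ran K\), which it then uses for the constant-angle block form \eqref{eq:constant-angle-cut} of \(\Pi\). Both arguments obtain \(\operatorname{rank}P_{\mathrm{ex}}=\operatorname{Fib}_{L+1}-\operatorname{Fib}_{L-1}=\operatorname{Fib}_L\) the same way.

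The parity worry you flag does not arise, and your fallback character formula is not right. The appendix computes \(\Tr Y_{\tau,L}=(-\varphi^{-1})^L\), since only the all-\(\tau\) word has a nonzero diagonal hoop entry. The resulting identity \(\operatorname{rank}E_+=(N^L)_{00}\) holds for every \(L\) and every cut position, not only at even \(L\); proposition~\ref{prop:positive-sector-cut} is stated in that generality. You do not even need to reopen the appendix. Tracing both sandwich identities gives \(\delta\Tr K=\Tr(K\Pi K)=\Tr(\Pi K\Pi)=\delta\Tr\Pi\). So \(\operatorname{rank}K=\Tr\Pi=\operatorname{Fib}_{L-1}\) follows from proposition~\ref{prop:positive-sector-cut} itself, and your whole rank computation then rests on that proposition plus the word count.
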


\begin{proof}
Using \eqref{eq:reverse-sandwich},
\begin{equation}
  (E\Pi E)^2
  =
  E\Pi(\Id-K)\Pi E
  =
  cE\Pi E.
  \label{eq:odd-projection-identity}
\end{equation}
Thus \(B^2=B=B^\dagger\), and \(P_{\mathrm{ex}}\) is its orthogonal
complement inside \(E\).  Appendix~\ref{app:projection-geometry}
constructs the cut partial isometry and proves the two rank formulas.
\end{proof}

The two compression results now give the physical compressed cut algebra:
\begin{equation}
  \cA_{\mathrm{cut},L}^{\mathrm{phys}}
  =
  \begin{cases}
    \mathbb C E_{+,L},&L\ \text{even},\\
    \mathbb C B_L\oplus\mathbb C P_{\mathrm{ex},L},&L\ \text{odd}.
  \end{cases}
  \label{eq:physical-cut-algebra}
\end{equation}
It is one-dimensional at even length and two-dimensional at odd length.
These algebras belong to different members of the length-indexed family:
the Hamiltonian prescription, ground-state prescription, cut measurement rule and
charge alphabet remain fixed, while parity selects the sector in which the
cut is compressed.  The rank formulas confirm that both odd components are
nonzero.  For the odd ground state,
proposition~\ref{prop:physical-hoop-sector} and
\eqref{eq:odd-sector-compression} give
\begin{equation}
  \langle\Omega_L,\Pi_L\Omega_L\rangle
  =
  c\langle\Omega_L,B_L\Omega_L\rangle
  =
  cq_L.
  \label{eq:odd-ground-cut-weight}
\end{equation}
Together these identities prove theorem~\ref{thm:parity-cut-charge}.  Scalar
compression fixes the even response completely.  At odd length the exact
two-character compression reduces the response to the ground-state weight
\(q_L\) of the cut-active projector \(B_L\), and the vacuum-charge probability
is \(cq_L\).  Thus compression fixes the form and information content of the
odd law while the physical state supplies its one surviving coordinate.  The
next section realizes this already-established response through one marked
Perron-root construction.

%% file: sections/physical-response.tex
\section{A common Perron-root representation across both parities}
\label{sec:physical-response}

Theorem~\ref{thm:parity-cut-charge} establishes the exact two-branch
cut-charge law, and section~\ref{sec:parity-mechanism} derives its
sector-selection and compression mechanism.  This section supplies a common
constructive representation of that established response.  For each fixed
length, it preserves the Fibonacci chain, its prescribed antiferromagnetic
ground state, and the cut while changing only their representation.
The construction is a length-indexed family: each chain Hilbert space is
mapped to its corresponding annular \(A_4\) height space, whose seam explicitly
retains the length parity.  This length-indexed family covering both parities
represents the same physical Born response through one \(A_4\) representation,
state bridge, and Perron-root derivative at each length.

\subsection{The \texorpdfstring{$A_4$}{A4} representation for both parities}
\label{sec:parity-carrier}

The representation uses height configurations on a four-vertex graph.  Label the
vertices of the \(A_4\) graph \(1,2,3,4\), let \(\rho(h)=5-h\) be the
reflection exchanging the two ends, and collect the Perron--Frobenius
weights together with the length-dependent seam:
\begin{equation}
  \psi=(1,\varphi,\varphi,1),
  \qquad
  \Theta_L=\rho^L.
  \label{eq:A4-data}
\end{equation}
The \(A_4\) height space (the carrier space for this representation) is
\begin{equation}
  \mathsf M_L^{\mathrm{pc}}
  =
  \operatorname{span}
  \left\{
    (h_0,\ldots,h_L):
    |h_{j+1}-h_j|=1,\quad
    h_L=\Theta_L(h_0)
  \right\}.
  \label{eq:parity-complete-carrier}
\end{equation}
Heights step by one, while the closing condition is explicitly
length-dependent: even length has the identity seam and odd length has the
graph-reflection seam.  The identity- and reflection-seamed closures lie in
the standard periodic \(A_4\) Temperley--Lieb/RSOS framework
\cite{AndrewsBaxterForrester1984,Pasquier1987,
ChuiMercatPearce2003,BelleteteEtAl2023}.  The seam records which member of
the representation family is used; the quotient below retains the same
Fibonacci microscopic model at both parities.

At each fixed length, quotienting height configurations by global reflection
recovers the Fibonacci fusion-path basis.  Let \(\kappa_L\) reflect every
height at once.  The orbit quotient
\begin{equation}
  \{1,4\}\longmapsto\mathbf 1,
  \qquad
  \{2,3\}\longmapsto\tau
  \label{eq:A4-orbit-quotient}
\end{equation}
has adjacency matrix \eqref{eq:fusion-adjacency}.  Each Fibonacci word has
two reflection-related height lifts: if \(a(x)\) is either lift of a
Fibonacci word \(x\), the other is \(\rho a(x)\).

\begin{proposition}[Orbit isometry for both parities]
\label{prop:orbit-isometry}
For every \(L\geq2\),
\begin{equation}
  U_L:\widehat{\cH}_L
  \xrightarrow{\ \cong\ }
  \ker(\kappa_L-\Id)\subset\mathsf M_L^{\mathrm{pc}},
  \qquad
  U_L\ket{x}
  =
  \frac{\ket{a(x)}+\ket{\rho a(x)}}{\sqrt2},
  \label{eq:orbit-isometry}
\end{equation}
is an isometry.  It intertwines the local generators, the twisted
translation, and every cut position:
\begin{align}
  e^{A_4}_{L,j}U_L&=U_Le^{\mathrm{Fib}}_{L,j},
  &
  \mathsf s_L^{A_4}U_L&=U_L\mathsf s_L^{\mathrm{Fib}},
  \label{eq:orbit-TL-translation}
  \\
  \widetilde\Pi_{L,j}U_L&=U_L\Pi_{L,j},
  &
  \widetilde\Pi_{L,j}
  &:=
  \mathbf 1_{\{h_j\in\{1,4\}\}}.
  \label{eq:orbit-cut-intertwiner}
\end{align}
\end{proposition}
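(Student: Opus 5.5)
The proof has three parts: a height-lift bijection, isometry from orthogonality of lifts, and the intertwining relations checked on basis vectors.

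\emph{Lifts.} Work basis-by-basis. Reduce the heights mod the orbit map \eqref{eq:A4-orbit-quotient}. A closed height path \((h_0,\ldots,h_L)\) with \(h_L=\Theta_L(h_0)\) projects to a cyclic Fibonacci word, because \(\rho\) preserves orbits. Two consecutive heights cannot both lie in \(\{1,4\}\), since \(1\) and \(4\) are not \(A_4\)-adjacent. This reproduces the no-two-adjacent-vacua rule. For the converse, fix a word \(x\) and choose \(h_0\) in the orbit of \(x_0\). The steps \(|h_{j+1}-h_j|=1\) then determine the lift uniquely: from \(1\) or \(4\) the next height is forced; from \(2\) or \(3\) the next orbit label selects \(1/4\) versus \(3/2\) uniquely. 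So each \(x\) has exactly two open lifts, \(a(x)\) and \(\rho a(x)\).

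\emph{Closing seam.} The key check is the seam. Crossing a \(\tau\tau\) edge swaps \(2\leftrightarrow3\), which is a \(\rho\)-flip relative to the orbit representative. A \(\tau\mathbf 1\tau\) passage returns to the same representative side, e.g.\ \(2\to1\to2\). After a full turn, the net reflection is therefore \(\rho^{N_{\tau\tau}(x)}\). By \eqref{eq:tau-tau-count} this equals \(\rho^{L-2z}=\rho^L=\Theta_L\). So both lifts close with exactly the seam prescribed in \eqref{eq:parity-complete-carrier}.

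\emph{Isometry.} Lifts of distinct words are distinct basis configurations, and \(a(x)\neq\rho a(x)\) because \(\rho\) has no fixed vertex. Hence the vectors \(U_L\ket{x}\) are orthonormal. The global reflection \(\kappa_L\) preserves the carrier because \(\rho\) commutes with \(\Theta_L\). Its \(+1\) eigenspace is spanned by the symmetrized orbit pairs, so \(U_L\) maps onto \(\ker(\kappa_L-\Id)\).

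\emph{Intertwining relations.} The cut relation \eqref{eq:orbit-cut-intertwiner} is immediate. Both lifts of \(x\) have \(h_j\in\{1,4\}\) exactly when \(x_j=\mathbf 1\), so \(\widetilde\Pi_{L,j}\) acts on \(U_L\ket{x}\) by the same scalar as \(\Pi_{L,j}\). For translation, \(\mathsf s_L^{A_4}\) shifts heights and applies the seam twist at the wrap, and it commutes with \(\kappa_L\). It sends the orbit pair of \(x\) to the orbit pair of the shifted word.

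For the generators, use the standard RSOS Temperley--Lieb face weights:
\[
\langle h_{j-1},h_j',h_{j+1}\rvert e_j\lvert h_{j-1},h_j,h_{j+1}\rangle=\delta_{h_{j-1}h_{j+1}}\sqrt{\psi_{h_j}\psi_{h_j'}}/\psi_{h_{j-1}}.
\]
Then compare entries with the Fibonacci golden-chain generator, built from \eqref{eq:fibonacci-F} with \(e_j=\varphi P_j\), in the four local orbit patterns:
\begin{itemize}
\item \(\tau\mathbf1\tau\), with local heights \((2,1,2)\) or \((2,3,2)\);
\item \(\tau\tau\tau\) with equal outer heights;
\item \(\tau\tau\tau\) with different outer heights;
\item \(\mathbf1\tau\mathbf1\).
\end{itemize}
Since \(\psi\) is \(\rho\)-invariant, the weights are equal on both lifts, so \(e^{A_4}\) commutes with \(\kappa_L\).

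\emph{Main obstacle.} The hard step is matching the entries and sign conventions: verifying that the \(A_4\) weights reproduce exactly \(\varphi\) times the entries of \eqref{eq:fibonacci-F} in the real gauge. The \(\tau\tau\tau\) block must give the matrix \(\begin{pmatrix}\varphi^{-1}&\varphi^{-1/2}\\ \varphi^{-1/2}&\varphi^{-2}\end{pmatrix}\cdot\varphi\)-type projector. Any relative sign must be absorbable by a diagonal gauge compatible with the fixed Fibonacci basis. I would check this first, using \(\psi=(1,\varphi,\varphi,1)\) explicitly. The seam edge also needs its own check: at the wrap, the twisted \(e_L\) must carry the same weights. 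This holds because \(\psi\circ\rho=\psi\).
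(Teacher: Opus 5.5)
Your proposal is correct and is essentially the paper's argument: a unique lift from \(h_0\) with reflection giving the second, endpoint \(\rho^L(h_0)\), orthogonality of orbit pairs, orbit-constant \(\psi\) for the generators, shifting for translation, and orbit membership for the cut; you justify the endpoint more explicitly than the paper does, via one \(\rho\)-flip per \(\tau\tau\) edge and \(N_{\tau\tau}=L-2z\). The step you flag as the main obstacle has no sign or gauge issue, because \(e_j=\varphi P_j\) uses only the all-positive vacuum column \((\varphi^{-1},\varphi^{-1/2})\) of \eqref{eq:fibonacci-F}, so the \((\tau,\cdot,\tau)\) block is \(\bigl(\begin{smallmatrix}\varphi^{-1}&\varphi^{-1/2}\\ \varphi^{-1/2}&1\end{smallmatrix}\bigr)\), which your RSOS entries \(\psi_1/\psi_2,\ \sqrt{\psi_1\psi_3}/\psi_2,\ \psi_3/\psi_2\) at \(h_{j\pm1}=2\) reproduce exactly; the target matrix you wrote and ``\(\varphi\) times the entries of \(F\)'' are both wrong, and your case ``\(\tau\tau\tau\) with different outer heights'' is empty, since a lift has \(h_{j-1}=h_{j+1}\) precisely when \(x_{j-1}=x_{j+1}\).
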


\begin{proof}
Appendix~\ref{app:carrier-covariance} constructs the two reflection-related
lifts and checks the generator, translation, and cut-intertwining identities
coefficient by coefficient.
\end{proof}

The cut intertwiner fixes the physical observable dictionary.  For any
normalized chain state, the expectation of \(\Pi_{L,j}\) agrees with the
expectation of \(\widetilde\Pi_{L,j}\) in the state transported by \(U_L\).
This follows from the isometry and cut intertwiner before specializing the
transported state.  Thus \(\widetilde\Pi_{L,j}\) is the height-space
representation of the already defined chain projector, rather than a separately
chosen height-space observable.  Appendix~\ref{app:carrier-covariance} records the
braid/hoop convention and neutral-insertion covariance, including transport
of the mark inside
\(\operatorname{span}\{\Id,\widetilde\Pi\}\).  At every length, the
construction represents the same uncompressed cut observable and its
Born dictionary.  After physical-sector selection, compressing that observable
gives the parity-dependent physical algebra
\eqref{eq:physical-cut-algebra}.

\subsection{The Perron row state is the AFM ground state}
\label{sec:perron-ground}

A common response also requires the height-space vector representing the physical
AFM ground state.  The homogeneous face rows provide this state bridge.  They
are assembled from the critical face weight
\begin{equation}
  W(u)=A(u)\Id+B(u)e,
  \qquad
  A(u)=\frac{\sin(\lambda-u)}{\sin\lambda},
  \qquad
  B(u)=\frac{\sin u}{\sin\lambda},
  \qquad
  \lambda=\frac{\pi}{5}.
  \label{eq:critical-face-weight}
\end{equation}
For fixed \(L\), let \(t_L(u)\) denote
the homogeneous face row on \(\mathsf M_L^{\mathrm{pc}}\).  The Yang--Baxter
relations give \([t_L(u),t_L(v)]=0\), so these rows form a commuting family
on that fixed height space.  Select its isotropic member:
\begin{equation}
  u_*=\frac{\lambda}{2},
  \qquad
  T_L=t_L(u_*).
  \label{eq:isotropic-row}
\end{equation}
The vector \(v_L\) below is the unit representative, chosen with positive
coordinates, of the simple Perron ray of \(T_L\).

\begin{theorem}[Perron row equals the physical ground state]
\label{thm:perron-ground-identity}
With positive normalization,
\begin{equation}
  v_L=U_L\Omega_L
  \label{eq:perron-ground-identity}
\end{equation}
for every \(L\geq3\).
\end{theorem}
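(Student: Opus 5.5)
The plan is to prove \(v_L=U_L\Omega_L\) by showing that \(U_L\Omega_L\) is a strictly positive eigenvector of \(T_L\) lying in \(\ker(\kappa_L-\Id)\), and then invoking simplicity of the Perron ray.

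\textbf{Step 1: \(T_L\) is a function of the Hamiltonian on the orbit image.} At the isotropic point \(u_*=\lambda/2\), both weights
\(A(u_*)=\sin(\lambda/2)/\sin\lambda\) and \(B(u_*)\) are strictly positive. Each face weight is therefore \(W(u_*)=A(u_*)(\Id+e)\), a nonnegative combination of the identity and \(e\).
\begin{itemize}
\item The row \(T_L\) commutes with \(t_L(u)\) for all \(u\).
\item By the standard Baxter argument, the logarithmic derivative of \(t_L(u)\) at the shift point \(u=0\) gives \(\sum_j e^{A_4}_{L,j}\) up to a translation factor. Here \(t_L(0)\) is the (twisted) translation \(\mathsf s_L^{A_4}\), with the seam \(\Theta_L\) accounted for.
\item Hence \(T_L\) commutes with \(H^{A_4}_L:=-\sum_j e^{A_4}_{L,j}\) and with \(\mathsf s_L^{A_4}\).
\end{itemize}
By proposition~\ref{prop:orbit-isometry}, \(H^{A_4}_LU_L=U_LH_L^{\mathrm{AFM}}\). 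So \(U_L\Omega_L\) spans the ground eigenspace of \(H^{A_4}_L\) restricted to \(\Ran U_L=\ker(\kappa_L-\Id)\). That eigenspace is one-dimensional by lemma~\ref{lem:positive-ground}. Since \(T_L\) commutes with \(\kappa_L\) (reflection symmetry of the face weights) and with \(H^{A_4}_L\), it preserves this one-dimensional eigenspace. Therefore \(U_L\Omega_L\) is an eigenvector of \(T_L\).

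\textbf{Step 2: Positivity and identification.}
\begin{itemize}
\item \(U_L\Omega_L\) is strictly positive on every height configuration, because each configuration is one of the two lifts of a Fibonacci word and \(\Omega_L>0\).
\item \(T_L\) is entrywise nonnegative, since it is built from the positive weights \(A(u_*)\) and \(B(u_*)\).
\item \(T_L\) is irreducible on \(\mathsf M_L^{\mathrm{pc}}\). This needs checking, using the connectivity of height moves together with the global reflection; the seam might split the space into two orbits under \(\kappa_L\), and one must rule that out.
\end{itemize}
Perron--Frobenius then says the only strictly positive eigenvector of an irreducible nonnegative matrix is the Perron vector. Hence \(U_L\Omega_L\) spans the Perron ray. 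Since \(U_L\) is an isometry, the unit normalization gives \(v_L=U_L\Omega_L\).

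If irreducibility fails, I would instead use \(\kappa_L\)-symmetry. The Perron ray is \(\kappa_L\)-invariant, so \(v_L\) lies in \(\ker(\kappa_L-\Id)\) anyway. Applying Perron--Frobenius to \(T_L\) restricted there, which is unitarily equivalent to a nonnegative matrix in the Fibonacci basis, gives the same conclusion.

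\textbf{Main obstacle.} The hard part is the precise relation between \(T_L\) and the Hamiltonian on the seamed carrier, especially for the reflection seam at odd length. The alternative is to avoid the derivative argument and instead show directly that \(T_L\) and \(H^{A_4}_L\) share the positive eigenvector, via commutation with all \(t_L(u)\). My first attempt would be to verify \([t_L(u),t_L(v)]=0\) for all \(u,v\) on the seamed space, which holds because \(\rho\) is a graph automorphism commuting with the face weights. Then I would expand \(\ln t_L(u)\) around \(u=0\) to get \([T_L,H^{A_4}_L]=0\). Everything after that is positivity bookkeeping.
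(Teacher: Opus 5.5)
Your proposal is correct and takes essentially the paper's route. Both arguments rest on three ingredients: row commutativity together with the logarithmic derivative at the twisted-shift point gives \([T_L,H]=0\); reflection symmetry keeps everything in \(\Ran U_L=\ker(\kappa_L-\Id)\); and the simple, strictly positive AFM ground ray is then matched with the Perron ray. The irreducibility of \(T_L\) that you flag is exactly what the paper's lemma~\ref{lem:common-positive-row} supplies: the row support contains the twisted shift and all shifted local Temperley--Lieb moves, and these connect every height configuration. Your fallback is therefore not needed.
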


\begin{proof}
Appendix~\ref{app:perron-response} proves that all rows in the physical
interval share one positive ray and that their regular-point logarithmic
derivative is the AFM Hamiltonian up to scalar and normalization.  This ray
and the AFM ground ray are both simple and strictly positive, so they
coincide under \(U_L\).
\end{proof}

Equation~\eqref{eq:perron-ground-identity} is the state bridge of the
construction.  Mathematically, it equates two normalized vectors: the
positive Perron eigenvector \(v_L\) and the \(U_L\)-transport of the AFM
ground-state vector \(\Omega_L\).  Physically, those vectors represent the
same state under the isometry.  The logarithmic-derivative relation in the
proof identifies the corresponding rays; simplicity and strict positivity
select the common ray uniquely, and the common positive unit normalization
fixes the vector equality.

\subsection{Derivative of the marked Perron root}
\label{sec:marked-response}

With the observable dictionary and state bridge established in that order,
the response is obtained by weighting the transported mark and
differentiating.  Suppress the cut position and write
\(\widetilde\Pi_L\) for its image under
\eqref{eq:orbit-cut-intertwiner}.  Define the positive congruence
\begin{equation}
  \mathsf Z_L(s)
  =
  \exp\!\left(\frac{s}{2}\widetilde\Pi_L\right),
  \qquad
  T_L(s)=\mathsf Z_L(s)T_L\mathsf Z_L(s),
  \label{eq:marked-row-deformation}
\end{equation}
and let \(\Lambda_L(s)\) be its simple Perron root for real \(s\) near
zero.  Because \(\mathsf Z_L(s)\) is positive diagonal in the height basis,
the congruence preserves the nonnegative zero pattern and irreducibility of
\(T_L\); Perron--Frobenius theory therefore supplies the simple root in the
stated neighbourhood.  The \(s\)-family is a response deformation rather
than another Yang--Baxter spectral family: its role is differentiation at
\(s=0\), and no commutativity between distinct values of \(s\) is invoked.

\begin{theorem}[Marked Perron-root derivative]
\label{thm:marked-perron-response}
For every \(L\geq3\),
\begin{equation}
  \left.
  \frac{\partial}{\partial s}\log\Lambda_L(s)
  \right|_{s=0}
  =
  v_L^{\mathsf T}\widetilde\Pi_Lv_L
  =
  \langle\Omega_L,\Pi_L\Omega_L\rangle
  =
  \chi_L.
  \label{eq:marked-perron-response}
\end{equation}
\end{theorem}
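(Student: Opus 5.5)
The plan is to apply first-order perturbation theory to a simple eigenvalue of a matrix family that is not symmetric, and then transport the answer back to the chain with the two earlier bridges. Write \(T_L(s)=\mathsf Z_L(s)T_L\mathsf Z_L(s)\). Because \(\mathsf Z_L(0)=\Id\) and \(\frac{d}{ds}\mathsf Z_L(s)|_{s=0}=\tfrac12\widetilde\Pi_L\), we have
\begin{equation*}
  T_L'(0)=\tfrac12\bigl(\widetilde\Pi_LT_L+T_L\widetilde\Pi_L\bigr).
\end{equation*}
Since \(\Lambda_L(s)\) is a simple eigenvalue near \(s=0\), it is analytic there. Its derivative is \(\Lambda_L'(0)=w^{\mathsf T}T_L'(0)v_L/(w^{\mathsf T}v_L)\), where \(w\) is the positive left Perron vector of \(T_L\).

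The key step is to identify \(w\) with \(v_L\). I would argue in one of two ways. The first is to check that the isotropic row \(T_L=t_L(\lambda/2)\) is symmetric in the height basis. At the isotropic point \(A(u_*)=B(u_*)\) up to normalization, and the face weight \(W(u_*)\) is a positive combination of \(\Id\) and the Hermitian \(e\), so the row transfer matrix is reflection-symmetric. The alternative uses the appendix statement behind theorem~\ref{thm:perron-ground-identity}: all rows in the physical interval share one positive ray, so \(T_L^{\mathsf T}=t_L(\lambda-u_*)=t_L(u_*)\) by crossing symmetry. Either way, the left Perron vector of \(T_L\) is proportional to \(v_L\).

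This identification is the step I expect to be the main obstacle. The reason is that it depends on the transpose or crossing convention of the row on the seamed carrier \(\mathsf M_L^{\mathrm{pc}}\), and the seam \(\Theta_L=\rho^L\) must be compatible with it. I would first try to prove \(T_L^{\mathsf T}=t_L(\lambda-u)\) facewise, using the symmetry of \(e\) and the fact that \(\rho\) is an involutive graph automorphism. This would make the argument uniform in parity.

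Granting \(w\propto v_L\) and using \(T_Lv_L=\Lambda_L(0)v_L\), the two terms give
\begin{equation*}
  v_L^{\mathsf T}T_L'(0)v_L=\tfrac12\bigl(\Lambda_L(0)v_L^{\mathsf T}\widetilde\Pi_Lv_L+\Lambda_L(0)v_L^{\mathsf T}\widetilde\Pi_Lv_L\bigr)=\Lambda_L(0)\,v_L^{\mathsf T}\widetilde\Pi_Lv_L .
\end{equation*}
Dividing by \(\Lambda_L(0)>0\) and using \(v_L^{\mathsf T}v_L=1\) gives the first equality in \eqref{eq:marked-perron-response}.

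For the remaining equalities, theorem~\ref{thm:perron-ground-identity} gives \(v_L=U_L\Omega_L\). The cut intertwiner \eqref{eq:orbit-cut-intertwiner} and the isometry property of proposition~\ref{prop:orbit-isometry} then give
\begin{equation*}
  v_L^{\mathsf T}\widetilde\Pi_Lv_L=\langle U_L\Omega_L,U_L\Pi_L\Omega_L\rangle=\langle\Omega_L,\Pi_L\Omega_L\rangle,
\end{equation*}
where we use that \(v_L\) is real, so the transpose pairing equals the inner product. Finally, \(\langle\Omega_L,\Pi_L\Omega_L\rangle=\chi_L\) by \eqref{eq:vacuum-cut-probability}.

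None of these steps distinguishes even from odd \(L\), so the identity holds on both towers. Its value then follows theorem~\ref{thm:parity-cut-charge}.
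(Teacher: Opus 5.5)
Your proposal is correct and follows essentially the same route as the paper: differentiate the congruence to get \(\tfrac12(\widetilde\Pi_LT_L+T_L\widetilde\Pi_L)\), use that the isotropic row is real symmetric so that the left and right Perron vectors coincide and Hellmann--Feynman gives \(\Lambda_L(0)\,v_L^{\mathsf T}\widetilde\Pi_Lv_L\), then transport back through \(U_L\) via the cut intertwiner and theorem~\ref{thm:perron-ground-identity}. The paper settles the symmetry step you flag in lemma~\ref{lem:common-positive-row} by crossing symmetry, and it handles the seam as you anticipate, through the reflection invariance \(\psi_{\rho(h)}=\psi_h\) of the Perron--Frobenius weights, which lets the gauge factors telescope through \(\Theta_L=\rho^L\).
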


\begin{proof}
The symmetric Perron-root differentiation and the transport back through
\(U_L\) are given in appendix~\ref{app:perron-response}.
\end{proof}

The response \eqref{eq:marked-perron-response} holds on both parity towers
and realizes the complete law of
theorem~\ref{thm:parity-cut-charge} through the length-indexed family
\(U_L\).  The \(A_4\) representation, state bridge, and Perron-root derivative form one
subordinate construction: at each fixed length, the derivative returns the
Born probability of the original cut projector in the physical ground state.
The two branches thereby acquire a common representation while retaining
their parity-dependent seams and their previously established
sector-compression mechanism.  Section~\ref{sec:odd-coordinate} now
interprets the information retained by the scalar and two-character
compressed algebras and gives the odd coordinate a height-space
representation.

%% file: sections/odd-coordinate.tex
\section{Categorical rigidity and state sensitivity}
\label{sec:odd-coordinate}

Theorem~\ref{thm:parity-cut-charge} states the exact two-branch law,
section~\ref{sec:parity-mechanism} derives it from sector selection followed
by cut compression, and section~\ref{sec:physical-response} realizes it at
every length through one Perron-root derivative.  What remains is the physical
reading of that mechanism: how much a compressed cut algebra can still say
about the state it is measured in.  The answer splits along parity.  The even
algebra is scalar, and the law it returns is fixed by the category alone; the
odd algebra carries two characters, and one number belonging to the ground
state survives inside it.  This section makes both statements precise.

\subsection{Scalar compression and nonzero raw-projector variance}
\label{sec:even-response-meaning}

Fix an even length \(L\) and let \(\psi\) be any normalized vector in the
selected sector \(\Ran E_{+,L}\).  Proposition~\ref{prop:positive-sector-cut}
and idempotency \(\Pi_L^2=\Pi_L\) fix both cut-projector moments at once:
\begin{equation}
  \langle\psi,\Pi_L\psi\rangle=\delta,
  \qquad
  \operatorname{Var}_{\psi}(\Pi_L)
  =
  \delta-\delta^2
  =
  \delta c.
  \label{eq:even-cut-moments}
\end{equation}
Together the two moments separate scalar compression from raw-projector
determinism.  Since \(\Pi_L\) is a projector, its measurement has only the
two spectral outcomes, and the fixed mean therefore fixes their probabilities.
At the same time, the variance
\(\delta c\) is strictly positive, so \(\psi\) is not an eigenvector of the
raw projector \(\Pi_L\): individual measurements of this cut still
fluctuate.  Every normalized state in the positive hoop sector consequently
gives the same binary distribution for the cut, including the physical
ground state, and that distribution is the even branch of
\eqref{eq:two-charge-probabilities}.  The Fibonacci weights, and with them
the finite-lattice value \(\ln g_{\rm cut}=\ln\varphi\) of
\eqref{eq:operational-g-cut}, therefore fully determine this cut-charge law on
the even branch.  Inside \(\Ran E_{+,L}\) the compressed observable has no
further state coordinate to report; the states retain other degrees of
freedom that this particular cut does not resolve.

The construction of section~\ref{sec:marked-response} realizes this rigidity
on the physical ground state.  Its Perron derivative
\eqref{eq:marked-perron-response} is defined by the physical Perron ray and
returns the category-fixed value for that state.  Sector-wide universality,
for arbitrary normalized vectors in \(\Ran E_{+,L}\), follows instead from
the scalar compression above.  Category, parity-driven sector selection and
cut geometry therefore close the even problem without an additional
state-dependent coordinate.

\subsection{Odd character weights versus charge outcomes}
\label{sec:odd-character-weights}

At odd \(L\), proposition~\ref{prop:physical-hoop-sector} selects the
complementary sector \(\Ran E_{-,L}\), and there the ground state keeps a
coordinate of its own.  That coordinate and the two pairs of weights it
controls are
\begin{equation}
  \begin{aligned}
    q_L
    &=
    \langle\Omega_L,B_L\Omega_L\rangle
    =
    \lVert B_L\Omega_L\rVert^2,
    \\
    \bigl(
      \langle B_L\rangle_{\Omega_L},
      \langle P_{\mathrm{ex},L}\rangle_{\Omega_L}
    \bigr)
    &=(q_L,1-q_L),
    &
    (P_{\mathbf 1},P_\tau)
    &=(cq_L,1-cq_L).
  \end{aligned}
  \label{eq:odd-characters-versus-cut-outcomes}
\end{equation}
The two pairs answer different questions.  The pair \((q_L,1-q_L)\) consists
of character weights: they are the ground-state expectations of the
projectors \(B_L\) and \(P_{\mathrm{ex},L}\) of the odd cut algebra.  The
pair \((cq_L,1-cq_L)\) gives the charge outcome probabilities carried by
\(\mathbf 1\) and \(\tau\).  The category-fixed compression coefficient \(c\)
in \eqref{eq:odd-sector-compression} belongs to the cut-projector compression
rather than to the state.  It sends the \(B_L\) character weight \(q_L\) to
the \(\mathbf 1\)-charge probability \(cq_L\), while normalization fixes the
\(\tau\)-charge probability as \(1-cq_L\).  Odd length changes the cut algebra
inside the selected sector from a scalar algebra to a two-character algebra,
and that change carries the contrast: the even branch shows categorical
rigidity, whereas the odd branch is sensitive to the state through \(q_L\).

\subsection{Limits of central data}
\label{sec:central-blindness}

The next question is what central periodic-TL or fusion-endomorphism data can
determine about \(q_L\).  The selected physical hoop sector is
\begin{equation}
  \mathcal H_{22}[L]
  =
  \Ran E_{-,L}
  \cong
  \overline W_{(1/2,1)}(L).
  \label{eq:selected-odd-packet}
\end{equation}
Appendix~\ref{app:odd-packet} identifies it as a simple module, from the
object-\(3\) folding, the oriented hoop characters and the nonzero
one-through-line action
\cite{GrahamLehrer1998,Belletete2020fusion,BelleteteEtAl2023}.  Simplicity
has an immediate consequence: every central periodic-TL or fusion
endomorphism acts on this simple module as a scalar, and a scalar assigns the same
value to every state in it.

Two states make that concrete.  Choose unit vectors \(b\in\Ran B_L\) and
\(p\in\Ran P_{\mathrm{ex},L}\), available by
proposition~\ref{prop:odd-cut-algebra}, and denote the associated vector
states by \(\omega_B\) and \(\omega_{\mathrm{ex}}\).  For every central
endomorphism \(Z\),
\begin{equation}
  \omega_B(Z)=\omega_{\mathrm{ex}}(Z),
  \qquad
  \omega_B(B_L)=1,
  \qquad
  \omega_{\mathrm{ex}}(B_L)=0.
  \label{eq:central-blindness-counterexample}
\end{equation}
The exhibited pair has a precise consequence inside
\(\mathcal H_{22}[L]\): central-endomorphism data alone cannot support a
universal rule that recovers the within-sector expectation of \(B_L\) for
every state in the selected simple odd module.  State-specific input may still
determine the
physical ground-state coordinate \(q_L\).  The representation-theoretic proof
is in appendix~\ref{app:odd-packet}.  Marked and otherwise noncentral
observables supply data outside this scope, and the next subsection gives one
such height-space representation.

\subsection{Height-space representation of the odd coordinate}
\label{sec:pointed-height-placement}

That observable comes from the height picture on a distinct height space.  The
isometry \(U_L\) of section~\ref{sec:parity-carrier} belongs to the length-indexed
family covering both parities, and it already maps the physical chain, the cut
and the Perron state to the \(A_4\) height space.  What is used here is a
separately typed map.  Within the standard face/height lineage
\cite{AndrewsBaxterForrester1984,OwczarekBaxter1987,
FinchFlohrFrahm2014}, appendix~\ref{app:pointed-height} constructs the
unitary \(V_L^{\rm ht}\) from \(\widehat{\cH}_L\) onto the reflection-even
periodic object-\(3\) height space.  This height space is distinct from the
target of \(U_L\).  The restriction of \(V_L^{\rm ht}\) maps
\(\Ran E_{-,L}\) unitarily onto its image in that height space, so a mark
assembled on the image transports back to the selected odd sector
\(\mathcal H_{22}[L]\).  Let \(H_j\) and \(R\) be its cut height and
reflection, set \(D_j(t)=t^{2H_j-5}\), \(\Pi_+=(\Id+R)/2\), and
\(c_r(t)=(t^r+t^{-r})/2\).  Write
\(\Pi_{22}^{\rm ht}=V_L^{\rm ht}E_{-,L}(V_L^{\rm ht})^{-1}\) and let
\(I_{{\rm odd},L}\) be the identity on \(\Ran E_{-,L}\).

\begin{theorem}[Height-space operator family at a marked site]
\label{thm:physical-pointed-height-pencil}
For \(t>0\), the mark on the selected odd sector
\begin{equation}
  K_{L,j}(t)
  =
  (V_L^{\rm ht})^{-1}
  \Pi_{22}^{\rm ht}\Pi_+D_j(t)\Pi_+\Pi_{22}^{\rm ht}
  V_L^{\rm ht}
  \label{eq:physical-pointed-height-definition}
\end{equation}
is a Hermitian observable on the selected odd sector and satisfies
\begin{equation}
  K_{L,j}(t)
  =
  c_1(t)I_{{\rm odd},L}
  +
  c\bigl[c_3(t)-c_1(t)\bigr]B_{L,j}.
  \label{eq:physical-pointed-height-pencil}
\end{equation}
For the reference cut \(B_{L,j}=B_L\), and therefore
\begin{equation}
  \langle\Omega_L,K_{L,j}(t)\Omega_L\rangle
  =
  c_1(t)
  +
  c\bigl[c_3(t)-c_1(t)\bigr]q_L.
  \label{eq:physical-two-character-response}
\end{equation}
\end{theorem}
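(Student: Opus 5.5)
The approach is to reduce the height-space mark to a function of the uncompressed cut projector, and then compress it with proposition~\ref{prop:odd-cut-algebra}. The first step is a symmetrization of \(D_j(t)\) under reflection. Since \(R\) sends the cut height \(H_j\) to \(5-H_j\), conjugation by \(R\) sends \(2H_j-5\) to \(-(2H_j-5)\), so \(RD_j(t)R=D_j(t^{-1})\). Because \(\Pi_+=(\Id+R)/2\) and \(\Pi_+R=\Pi_+\), I would write
\begin{equation*}
  \Pi_+D_j(t)\Pi_+
  =\Pi_+\tfrac12\bigl(D_j(t)+RD_j(t)R\bigr)\Pi_+
  =\Pi_+\,c_{\lvert 2H_j-5\rvert}(t)\,\Pi_+ ,
\end{equation*}
where \(c_{\lvert 2H_j-5\rvert}(t)\) is a diagonal operator in the height basis. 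Its values are read off directly. Heights in the orbit \(\{1,4\}\) give \(c_3(t)\), and heights in \(\{2,3\}\) give \(c_1(t)\). This diagonal operator depends only on the reflection orbit, so it commutes with \(R\) and hence with \(\Pi_+\). On the reflection-even space it therefore equals
\begin{equation*}
  c_1(t)\Id+\bigl[c_3(t)-c_1(t)\bigr]\widetilde\Pi^{\rm ht}_j,
  \qquad
  \widetilde\Pi^{\rm ht}_j:=\mathbf 1_{\{H_j\in\{1,4\}\}}.
\end{equation*}
Hermiticity of \(K_{L,j}(t)\) for \(t>0\) also follows here: every factor in the definition is a real symmetric operator, and \(V_L^{\rm ht}\) is unitary.

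The second step is the cut dictionary for \(V_L^{\rm ht}\). I need the analogue of \eqref{eq:orbit-cut-intertwiner}, namely
\begin{equation*}
  \widetilde\Pi^{\rm ht}_jV_L^{\rm ht}=V_L^{\rm ht}\Pi_{L,j}.
\end{equation*}
The target of \(V_L^{\rm ht}\) is the reflection-even part of the height space, so \(\Pi_+\) acts as the identity on \(\Ran V_L^{\rm ht}\). I would extract this intertwining from the construction in appendix~\ref{app:pointed-height}, using the same orbit quotient \eqref{eq:A4-orbit-quotient}: a vacuum fusion label at edge \(j\) corresponds exactly to a cut height in \(\{1,4\}\). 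Conjugating back by \(V_L^{\rm ht}\) then gives
\begin{equation*}
  (V_L^{\rm ht})^{-1}\Pi_+D_j(t)\Pi_+V_L^{\rm ht}
  =c_1(t)\Id+\bigl[c_3(t)-c_1(t)\bigr]\Pi_{L,j}.
\end{equation*}
I expect this to be the main obstacle. The map \(V_L^{\rm ht}\) targets the object-\(3\) height space, not the target of \(U_L\), so the cut-label/height dictionary must be checked coefficient by coefficient from that appendix rather than imported from proposition~\ref{prop:orbit-isometry}. The first thing I would try is to confirm that \(V_L^{\rm ht}\) is built from symmetrized reflection-pair lifts, as \(U_L\) is; if so, the check is identical to the one for \(U_L\).

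The third step is compression. Since \(\Pi_{22}^{\rm ht}=V_L^{\rm ht}E_{-,L}(V_L^{\rm ht})^{-1}\), the outer factors pull back to \(E_{-,L}\). We obtain
\begin{equation*}
  K_{L,j}(t)=c_1(t)E_{-,L}+\bigl[c_3(t)-c_1(t)\bigr]E_{-,L}\Pi_{L,j}E_{-,L}.
\end{equation*}
Proposition~\ref{prop:odd-cut-algebra}, applied at cut \(j\), gives \(E_{-,L}\Pi_{L,j}E_{-,L}=cB_{L,j}\), and \(E_{-,L}\) acts as \(I_{{\rm odd},L}\) on the selected sector. Together these yield \eqref{eq:physical-pointed-height-pencil}. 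For the reference cut, \(B_{L,j}\) is by definition the projection \(B_L\) of \eqref{eq:odd-character-projectors}. Finally, proposition~\ref{prop:physical-hoop-sector} places \(\Omega_L\in\Ran E_{-,L}\) at odd \(L\). Taking the expectation of \eqref{eq:physical-pointed-height-pencil} in \(\Omega_L\) and using \(q_L=\langle\Omega_L,B_L\Omega_L\rangle\) gives \eqref{eq:physical-two-character-response}.
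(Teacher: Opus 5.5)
Your proposal is correct and follows the paper's own route. The paper likewise uses the reflection compression \(\Pi_+D_j(t)\Pi_+=c_3(t)P_{3,j}+c_1(t)P_{1,j}\) on the reflection-even space. It then transports the paired-height cut through \(V_L^{\rm ht}\), and it applies the odd-sector compression \(E_{-,L}\Pi_{L,j}E_{-,L}=cB_{L,j}\). Your expected obstacle does not arise: \(V_L^{\rm ht}\) is by construction built from the symmetrized lifts \((\ket g+\ket{\rho g})/\sqrt2\) under the fold \(\pi\). Your symmetrization via \(RD_j(t)R=D_j(t^{-1})\) only spells out the paper's reflection-compression step, and pulling back before compressing rather than after makes no difference.
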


\begin{proof}
Appendix~\ref{app:pointed-height} proves the reflection compression and
transports the cut projector through \(V_L^{\rm ht}\).  Substitution gives
\eqref{eq:physical-pointed-height-pencil}; its ground-state expectation is
\eqref{eq:physical-two-character-response}.
\end{proof}

The theorem law \eqref{eq:two-charge-probabilities} is the physical
conclusion, and the Perron-root derivative \eqref{eq:marked-perron-response} is its
common realization across both parities.  The height-space operator family
\eqref{eq:physical-pointed-height-pencil} gives a separate representation of
the odd Born coordinate through its physical ground-state expectation.  The
observable is built from the sector identity and the odd character projector
with state-independent coefficients; \(q_L\) appears only after taking that
expectation in \(\Omega_L\).  Appendix~\ref{app:pointed-height} records the
raw height effects, the cut-projector covariance and a related ambient vertex
identity, making explicit that the ambient identity does not define the
physical dictionary.
Section~\ref{sec:exact-odd-evaluation} then evaluates \(q_L\) directly on the
physical fusion-path Hilbert space.

%% file: sections/exact-odd-evaluation.tex
\section{Exact finite-size evaluation of the odd coordinate}
\label{sec:exact-odd-evaluation}

At odd \(L\), the preceding section identifies \(q_L\) as the physical Born
coordinate controlling the weights in
\eqref{eq:odd-characters-versus-cut-outcomes}.  This section gives that same
coordinate an exact finite-dimensional evaluation directly on the physical
fusion-path Hilbert space, without separately constructing the ground-state
eigenvector.  At each fixed odd \(L\), the evaluator uses the sign-flipped
Hamiltonian, the cut projector \(\Pi_L\), and the category-fixed
coefficient \(c\); the characteristic polynomial, spectral root and adjugate
below are obtained from that Hamiltonian.

\subsection{Simple Perron root and physical spectral projector}

The sign flip reverses the energy ordering: the physical ground state of
\(H_L^{\mathrm{AFM}}\) is the top eigenstate of \(\mathcal K_L\).  In the
physical fusion-path basis, \(\mathcal K_L\) is nonnegative, irreducible and
Hermitian, so that top eigenvalue is its largest simple Perron root and its
positive ray is the physical ground-state ray.  Define
\begin{equation}
  \mathcal K_L:=-H_L^{\mathrm{AFM}}=\sum_{j=1}^{L}e_j,
  \qquad
  p_L(z):=\det(z\Id-\mathcal K_L),
  \qquad
  \varepsilon_L:=\max\{z\in\mathbb R:p_L(z)=0\},
  \label{eq:odd-spectral-data}
\end{equation}
and set
\begin{equation}
  \mathcal A_L:=\operatorname{adj}(\varepsilon_L\Id-\mathcal K_L).
  \label{eq:odd-spectral-adjugate}
\end{equation}
Thus \(\mathcal K_L\), \(\Pi_L\) and \(c\) are the specified finite-size
data.  From \(\mathcal K_L\) one forms \(p_L\), selects \(\varepsilon_L\)
and constructs \(\mathcal A_L\).  At the simple root,
\(\varepsilon_L\Id-\mathcal K_L\) has a one-dimensional nullspace and its
adjugate has rank one.  Hermiticity identifies the left and right null rays
with the same normalized physical ground-state ray, while the choice of the
largest root makes the nonzero scalar multiplying its projector positive.
Consequently its trace is nonzero, and the normalized adjugate is exactly
the physical ground-state projector,
\begin{equation}
  \frac{\mathcal A_L}{\Tr\mathcal A_L}
  =|\Omega_L\rangle\langle\Omega_L|.
  \label{eq:normalized-adjugate-projector}
\end{equation}
The cofactor matrix therefore encodes the orthogonal ground-state projector;
no separately computed ground-state eigenvector is required.

\subsection{Adjugate/cofactor trace representation}

Pairing that projector with the cut projector gives \(\chi_L\), and odd-sector
compression then converts that expectation into \(q_L\):
\begin{equation}
  q_L
  =\frac{1}{c}\,
    \frac{\Tr(\Pi_L\mathcal A_L)}{\Tr\mathcal A_L},
  \qquad
  \chi_L=cq_L.
  \label{eq:physical-odd-adjugate-evaluation}
\end{equation}

Equation~\eqref{eq:physical-odd-adjugate-evaluation} follows by tracing
\(\Pi_L\) against the normalized adjugate.  The trace ratio is the cut-projector
ground-state expectation \(\chi_L\), and the physical compression
\eqref{eq:odd-sector-compression} identifies that expectation as \(c\) times
the already-defined \(q_L\); division by \(c\) therefore recovers \(q_L\).
Appendix~\ref{app:odd-evaluator} gives the proof and exact controls.  The
numerator and denominator are finite matrix traces on the physical
fusion-path space; \(\mathcal K_L\), \(\Pi_L\), \(c\), \(\varepsilon_L\) and
\(\mathcal A_L\) retain the distinct input and derived roles specified above.
At each fixed odd length, the output is precisely the same \(L\)-indexed
coordinate carried by the odd branch.

\subsection{Calculational outlook}

Computationally, the representation trades a separate eigenvector calculation
for determination of the largest simple root and construction of its finite
adjugate.  Its matrix order is
\(\dim\widehat{\cH}_L=\operatorname{Fib}_{L-1}+\operatorname{Fib}_{L+1}\),
so the root and cofactor calculations still act on the full physical
fusion-path space.  For every fixed odd length, this gives exact algebraic
access to \(q_L\) and supplies finite-size calculational support for the odd
probability law.  Recurrence-based evaluation, asymptotic access, and a
continuum dictionary are separate questions taken up in the discussion.

%% file: sections/discussion.tex
\section{Discussion}
\label{sec:discussion}

For the periodic antiferromagnetic golden-chain ground-state family, with one
fusion edge chosen as the cut at each length, theorem~\ref{thm:parity-cut-charge}
establishes an exact parity-complete cut-charge law for every finite
\(L\geq3\).  At even length, the compressed cut algebra is one-dimensional,
so categorical data fix the response completely.  At odd length, the algebra
has two characters \eqref{eq:physical-cut-algebra}, and the response retains
the single physical ground-state coordinate \(q_L\).  Parity thus changes the
information carried by the unchanged measurement rule, from categorical
rigidity at even length to one-coordinate state sensitivity at odd length.  A
common Perron-root derivative formula realizes both branches within a shared
state--observable dictionary (theorem~\ref{thm:marked-perron-response}).  The
finite-dimensional adjugate representation
\eqref{eq:physical-odd-adjugate-evaluation} gives exact calculational access to
\(q_L\) on the physical fusion-path Hilbert space.  Each \(L\) has its own
Hilbert space, Hamiltonian, ground state and cut projector.  Across the family,
the local Hamiltonian prescription, ground-state selection rule, charge
alphabet and cut measurement rule remain fixed.

Two arrows account for the parity dependence across this length-indexed
family.  The first is sector selection.  Hoop commutation and ground-ray
simplicity place the physical ground ray in one hoop eigenspace.  In the fixed
real fusion-path gauge, its strict positivity and the parity sign of the hoop
column then determine which eigenspace is physical.  It is the positive sector
at even \(L\) and the complementary sector at odd \(L\).  The second arrow
compresses the corresponding length-\(L\) cut projector, whose
measurement rule is common to the family.  At even length the compression is
scalar \eqref{eq:positive-sector-sandwich}, so the category-fixed coefficients
determine the full binary law.  This does not make the raw projector
deterministic.  Its ground-state variance remains nonzero
\eqref{eq:even-cut-moments}, and individual outcomes still fluctuate.  At odd
length the compressed algebra has two characters
\eqref{eq:physical-cut-algebra}, with ground-state weights
\((q_L,1-q_L)\).  These weights are not charge outcomes.  The category-fixed
coefficient \(c\) sends the cut-active weight to the vacuum-charge probability
\(cq_L\), and normalization fixes the complementary probability \(1-cq_L\)
\eqref{eq:odd-characters-versus-cut-outcomes}.  Parity therefore selects the
sector before compression.  The common measurement rule carries no
state-dependent cut coordinate at even length and exactly one at odd length.

The even pair has a precise formula antecedent.  Density-matrix work on
integrable face models, including the level-three Fibonacci case, identifies
one-site entries in a quantum-dimension-one sector with local-height
probabilities and reports the size-independent critical \(A_4\) values
\cite{FrahmWesterfeld2021,WesterfeldEtAl2023}.  Foundational work computed
exact local-height probabilities for solvable SOS models
\cite{DateEtAl1987SOSLocalHeight}.  Later studies related local-height
probabilities to conformal invariance \cite{SaleurBauer1989LocalHeightCFT} and
developed multipoint probabilities for integrable RSOS models
\cite{LukyanovPugai1996}.  Warnaar unified thermodynamic Bethe-ansatz and
corner-transfer-matrix methods for ABF local-height probabilities
\cite{Warnaar1996ABFLocalHeight}.  Modern work constructs integrable lattice
discretizations of topological defect lines in minimal models
\cite{SinhaTavaresRoySaleur2026} and develops ADE graph-fusion and defect
frameworks \cite{PearceHeymannQuella2025ADE}.  The composite-height route links
integrable anyon chains to RSOS models \cite{KakashviliArdonne2012}.
Critical local-height analysis of that composite model supplies probability
expressions and conformal-character content \cite{NissinenArdonne2012}.
Periodic RSOS defects provide the compatible hoop and defect-sector language
\cite{Belletete2020fusion}.  Under the orbit fold
\eqref{eq:A4-orbit-quotient}, each component of the even pair
\eqref{eq:even-fibonacci-weights} is recovered exactly by summing over one
reflection orbit the one-site local-height probabilities in the
quantum-dimension-one sector, rather than height labels or linear
Perron--Frobenius weights.  The
antecedent uses thermodynamic ground-state input, and its finite-sector
factorization rests on observed structure.  Here the same pair is the even
branch of a theorem for the periodic antiferromagnetic ground-state family at
every finite \(L\geq3\).  The theorem also supplies the odd branch and one
sector-selection and compression mechanism for both parities.  Thus the
shared pair links exact settings with different states, geometries and
warrants.

The projector also has a direct antecedent.  In an open, pinned Fibonacci
chain, Chandran, Schulz and Burnell used the vacuum-label fusion-path bond
projector to study locality and eigenstate thermalization.  They derived exact
finite-size probabilities at infinite temperature and tested mid-spectrum
eigenstates numerically \cite{ChandranSchulzBurnell2016}.  In the common
fusion-path coordinate, its projector onto the trivial bond label uses the
same diagonal rule as \(\Pi_L\) in
\eqref{eq:named-cut-decomposition}.  The
corresponding Born law differs because the ensemble, boundary condition and
physical question differ from those of the periodic antiferromagnetic ground
state.  Locality also depends on the coordinates used.  The projector reads
one fusion-path bond label.  In physical anyon variables, however, that label
records the accumulated fusion charge of all anyons on one side.  That work
already describes the projector as nonlocal or quasi-local in physical
variables, rather than as a few-anyon observable.  We call it the cut
projector here to keep this distinction explicit.  The
state--measurement pair in its geometry is the unit of comparison.  Its Born law is
\eqref{eq:charge-probabilities}.

Probability enters the neighboring symmetry literature through a different
pipeline.  Local actions of non-invertible symmetries can be represented by
completely positive quantum operations, including lattice implementations of
Kramers--Wannier duality that retain twisted and untwisted charge sectors
\cite{OkadaTachikawa2024,KhanKhanMohd2024}.  Fusion-category symmetries have
likewise been formulated as isometries among twisted-sector Hilbert spaces and
as trace-preserving channels in a categorical probability-preservation
construction \cite{BartschGaiSchaferNameki2026}.  The resulting probabilities
characterize an action or channel.  Complementary structural work treats
generalized charges and symmetry-TFT data
\cite{BhardwajSchaferNameki2025ChargesII}, matrix-product-operator intertwiners
\cite{LootensEtAl2023}, operator-algebraic fusion-category symmetry on the
lattice \cite{EvansJones2025FusionCategoryLattice}, and a generalized Wigner
theorem for non-invertible symmetries
\cite{OrtizEtAl2026GeneralizedWigner}.  Measurement-and-feedback protocols
occupy a distinct operational branch in which measurement helps prepare the
state rather than read out a fixed one
\cite{LiMitraLu2026MeasurementFeedback}.  Here the preparation rule and the
cut measurement rule remain fixed across the length-indexed family.
Each \(L\) nevertheless has its own finite-chain state and Hilbert space.  The
output is the Born distribution of that readout on the corresponding ground
state.  No symmetry operation is applied in this protocol.  The marked
deformation of the Perron row instead provides a calculational response
construction for the
fixed dictionary.  It is not applied to the ground state as a physical
channel.  Its logarithmic derivative returns the original expectation value
(theorem~\ref{thm:marked-perron-response}).  Process probabilities and
fixed-readout Born probabilities therefore share an operational taxonomy
while remaining distinct objects.

Information granularity supplies another comparison.  Tube algebras and
generalized charges organize global or twisted-sector data through central
idempotents and their representations
\cite{BartschBullimoreGrigoletto2023,ShaoSorceSrivastava2025,
FerragattaVanRees2026,ChoiRayhaunZheng2024}.  Under complete-positivity and
covariance constraints, primitive central idempotents define extremal
tube-sector POVM effects and optimal one-shot sector probabilities
\cite{He2026TubeDistinguishability}.  The selected odd sector of the golden
chain provides a
separate finite-ground-state example of the global-versus-state-resolved distinction.
On that selected simple odd module, every central periodic-TL or fusion
endomorphism acts as a scalar.  The two states in
\eqref{eq:central-blindness-counterexample} therefore agree on all such
central data but give the cut-active character projector the values \(1\) and
\(0\).  No state-independent function of the central characters can recover
the expectation of \(B_L\) for every state in that selected simple odd module.  The physical
coordinate \(q_L\) is the ground-state expectation of this noncentral
projector.  Binary charge labels record measurement outcomes, and
\((q_L,1-q_L)\) are ground-state weights on the two compressed characters.
Tube-sector POVM elements are effects in a different operational algebra.

Entanglement changes the geometry as well as the information retained.  For a
spatial interval in an anyonic system, the reduced-state construction is
graded by boundary-component charges and depends on geometry, fusion-tree
embedding and superselection structure
\cite{Pfeifer2014anyonicEntanglement,KatoFurrerMurao2014,
BondersonKnappPatel2017,CornfeldEtAl2019,YaukHacklHahn2026,
YeYouXu2026BipartiteAnyonicEntanglement}.  In this periodic chain, the internal relation
\eqref{eq:charge-probabilities} establishes that the cut-charge law retains
one endpoint charge and sums over the other.  It is therefore a one-endpoint
marginal of the joint charge distribution, not the full two-endpoint reduced
state or its sector decomposition.  Categorical symmetry-resolved
entanglement instead decomposes a full reduced state using symmetric
entangling-boundary states
\cite{SauraBastida2024catSREE,Das2024twistedSREE,
ChoiRayhaunZheng2024,PearceHeymannQuella2025ADE}.  Heymann and Quella
correct the bulk projector formulas by incorporating explicit boundary
conditions at the entangling surfaces.  They test asymptotic predictions with
open critical anyonic chains used as local-Hamiltonian surrogates
\cite{HeymannQuella2025}.  Those results concern continuum or asymptotic
entanglement data with two entangling surfaces.  The present quantity is an
exact finite-\(L\) Born probability on a closed ring.  Fibonacci projectors and
quantum dimensions make the comparison useful.  A one-endpoint marginal and
its binary outcome law do not determine the full reduced density matrix or its
symmetry-resolved entropies.

The even branch defines a concrete continuum comparison target.
Equation~\eqref{eq:operational-g-cut} makes \(g_{\mathrm{cut}}\) the square
root of the exact finite-lattice cut-charge probability ratio and gives
\(g_{\mathrm{cut}}=\varphi\).  Within the present theorem it is an operational
diagnostic of the category-fixed branch.  Affleck--Ludwig boundary entropy, by
contrast, is defined from a continuum boundary factor
\cite{AffleckLudwig1991,Cardy1989}.  Defect \(g\)-functions depend on continuum defect
or boundary fusion data, and non-topological integrable defects can exhibit
oscillatory finite-size behavior
\cite{HeJiangLiu2026,GutperleEtAl2024OrbifoldDefectEntropy,
ChoiRayhaunZheng2024}.  Lattice Ising
entanglement with a topological defect also shows that zero-mode finite-size
corrections can obstruct a direct inference from continuum modular data
\cite{RoySaleur2022}.  Numerical agreement alone would therefore not identify
these quantities.  A controlled comparison must specify a
lattice-to-continuum endpoint operator, a boundary or defect state, and a
scaling geometry before taking a limit
\cite{SinhaTavaresRoySaleur2026,PearceHeymannQuella2025ADE,ZiniWang2018}.
The exact lattice value is the input
to that test, not a boundary or defect \(g\)-factor already established.

The demand for an explicit dictionary also governs any transfer of the
mechanism beyond the Fibonacci chain.  A second model begins with a physical
state or sector selector analogous to \eqref{eq:parity-sector-selection}.
Transfer then asks whether compressing one fixed cut observable on that
sector changes its character structure in a controlled way, as in
\eqref{eq:physical-cut-algebra}.  A compatible response construction must
preserve the state--observable dictionary as its control parameter varies and
recover the original expectation
(theorem~\ref{thm:marked-perron-response}).  The diagnostic is that structure
together with the number of surviving state coordinates.  Exact Ising lattice
defects establish a general lattice-defect paradigm
\cite{AasenMongFendley2016}.  For analogous selectors, periodic RSOS defects
provide hoop and sector frameworks \cite{Belletete2020fusion}, supplemented by
the affine Temperley--Lieb module framework \cite{BelleteteEtAl2023}.  The
integrable \(A_4\) density-matrix literature provides the local-height
comparator used here \cite{FrahmWesterfeld2021,WesterfeldEtAl2023}.  Related
transfer questions may be posed in lattice settings for topological sectors
and non-invertible symmetries
\cite{LootensEtAl2024,BhardwajEtAl2026LatticeModels,
EvansJones2025FusionCategoryLattice}, and in tensor-network treatments of
topological defects \cite{AasenFendleyMong2020,HauruEtAl2016}.
Fusion-category and anyon-chain constructions provide neighboring algebraic
and scaling settings
\cite{FerragattaVanRees2026,AntunesRong2026CoupledAnyonChains,ZiniWang2018}.
Several
ingredients remain specific to Fibonacci.  Its selector uses
\(\tau\otimes\tau=\mathbf1\oplus\tau\), the two hoop eigenvalues and the
\((-1)^L\) transport sign, and the construction depends further on the \(A_4\)
reflection seam and fold and on the constants \(\delta\) and \(c\).  So do the
selected odd sector and the height-space construction at the marked site.  Transfer to
a second chain therefore requires the selector, compression change and
response dictionary under explicit hypotheses.

In the Fibonacci realization, the theorem is exact for the periodic
antiferromagnetic ground-state family at every finite \(L\geq3\).  At each
length, one fusion edge is chosen as the cut, and
\eqref{eq:charge-probabilities} identifies its one-endpoint marginal as the
measured quantity.  Theorem~\ref{thm:parity-cut-charge} gives the complete
two-outcome Born response at both parities.  Only the odd branch retains a
state-dependent coordinate.  Its compressed-character weights
\((q_L,1-q_L)\) become the physical charge probabilities
\((cq_L,1-cq_L)\) after category-fixed compression and normalization.
Equation~\eqref{eq:physical-odd-adjugate-evaluation} evaluates \(q_L\) on
the physical fusion-path space from the largest simple Perron root and a
ratio of adjugate traces.  This avoids a separate ground-state eigenvector
calculation, though the evaluation still acts on the full physical space,
whose matrix order
\(\dim\widehat{\cH}_L=\operatorname{Fib}_{L-1}+\operatorname{Fib}_{L+1}\)
grows exponentially with \(L\).  The theorem and endpoint dictionary fix the
law and the measured object.  The representation gives exact
finite-dimensional access to the remaining coordinate.  The full two-endpoint
distribution, controlled continuum limits, stability under Hamiltonian
deformation and lower-cost evaluation remain four distinct targets.

The complete finite-size theorem opens several further directions.  For the
odd coordinate, the exact evaluator in
\eqref{eq:physical-odd-adjugate-evaluation} acts on an exponentially large
physical space.  The open problem is to find a more efficient exact
representation that reproduces the same finite-\(L\) values with a proved
complexity improvement over the full-space evaluation.
That same direction includes the large-\(L\) behaviour of \(q_L\), which remains
open.
The continuum meaning
of the fusion-edge marginal is a second such direction.  A controlled
lattice-to-continuum interpretation must specify a lattice endpoint operator,
its associated boundary or defect state, and the scaling geometry.  Once that
dictionary is fixed, one can ask which continuum quantity, if any, corresponds
to the exact finite-lattice ratio \(g_{\mathrm{cut}}\) in
\eqref{eq:operational-g-cut}.  The joint endpoint distribution remains a
separate, more stringent problem.  Beyond Fibonacci, the central question is
how much of the mechanism survives.  A second fusion-category or ADE chain
would need a physical-sector selector analogous to
\eqref{eq:parity-sector-selection} and a compressed physical cut algebra for
a fixed cut observable as in \eqref{eq:physical-cut-algebra}.  Its
calculational response deformation would have to preserve the
state--observable dictionary and recover the original expectation at zero
mark, as in theorem~\ref{thm:marked-perron-response}.  A realization of this
kind would distinguish categorical ingredients from features specific to
Fibonacci.

%% file: appendices/projection-geometry.tex
\section{Projection geometry, traces, and rank formulas}
\label{app:projection-geometry}

This appendix records the finite-dimensional bookkeeping used by the two
sector-compression propositions in section~\ref{sec:parity-mechanism}.  It
does not introduce another observable or another ground-state sector.

\subsection{Positive-sector rank and graph projector}

On the diagonal of the hoop, only the all-\(\tau\) path contributes, giving
\begin{equation}
  \Tr Y_{\tau,L}=(-\varphi^{-1})^L.
  \label{eq:cyclic-hoop-trace}
\end{equation}
Since \(E_+\) is an orthogonal projector,
\begin{equation}
  \operatorname{rank}E_+
  =
  \frac{\Tr N^L+\varphi(-\varphi^{-1})^L}{\cD^2}
  =
  (N^L)_{00}
  =
  \dim\cH_{\mathbf 1}^{(\ell)}.
  \label{eq:positive-sector-rank}
\end{equation}
The middle equality follows by resolving \(N^L\) into the eigenspaces
\(\varphi\) and \(-\varphi^{-1}\).

Let \(S=\Ran E_+\).  Equations~\eqref{eq:local-vacuum-block} and
\eqref{eq:positive-sector-rank} imply that the coordinate projection
\(S\to\cH_{\mathbf 1}^{(\ell)}\) is bijective.  Thus \(S\) is the graph of a
map \(T:\cH_{\mathbf 1}^{(\ell)}\to\cH_\tau^{(\ell)}\).  The upper-left
block of its orthogonal graph projector is
\((\Id+T^\dagger T)^{-1}\).  Because
\((E_+)_{00}=\delta\Id\),
\begin{equation}
  (\Id+T^\dagger T)^{-1}=\delta\Id,
  \qquad
  T^\dagger T=\varphi^2\Id.
  \label{eq:graph-isometry}
\end{equation}
Hence \(T=\varphi V\) for an isometry \(V\), and the graph-projector formula
becomes
\begin{equation}
  \cD^2E_+
  =
  \begin{pmatrix}
    \Id&\varphi V^\dagger\\
    \varphi V&\varphi^2VV^\dagger
  \end{pmatrix}.
  \label{eq:positive-graph-projector}
\end{equation}
Multiplying the blocks in \eqref{eq:positive-graph-projector} gives
\eqref{eq:positive-sector-sandwich}; its upper-left block gives
\eqref{eq:reverse-sandwich}.

\subsection{Odd-sector ranks and the full cut projector}

Use the notation \(K=E_+\), \(E=E_-\), and \(\Pi=\Pi_0\) from
section~\ref{sec:odd-two-characters}.  The partial isometry
\begin{equation}
  \mathcal V_L=(c\delta)^{-1/2}K\Pi E
  \label{eq:cut-partial-isometry}
\end{equation}
obeys
\begin{equation}
  \mathcal V_L^\dagger\mathcal V_L=B,
  \qquad
  \mathcal V_L\mathcal V_L^\dagger=K.
  \label{eq:cut-partial-isometry-support}
\end{equation}
Hence \(\operatorname{rank}B=\operatorname{rank}K=(N^L)_{00}
=\operatorname{Fib}_{L-1}\).  Since
\(\Tr N^L=\operatorname{Fib}_{L-1}+\operatorname{Fib}_{L+1}\),
the complementary rank is
\(\operatorname{rank}P_{\mathrm{ex}}=\operatorname{Fib}_L\).

The same partial isometry makes the constant-angle geometry explicit:
\begin{equation}
  \Pi=
  \begin{pmatrix}
    cB&\sqrt{c\delta}\,\mathcal V_L^\dagger\\
    \sqrt{c\delta}\,\mathcal V_L&\delta K
  \end{pmatrix}
  \quad\text{on}\quad
  E\widehat{\cH}_L\oplus K\widehat{\cH}_L.
  \label{eq:constant-angle-cut}
\end{equation}
This matrix is the projection geometry behind the scalar/two-character
compression; the physical algebra itself remains
\eqref{eq:physical-cut-algebra}.

%% file: appendices/carrier-covariance.tex
\section{\texorpdfstring{$A_4$}{A4} representation for both parities and neutral-insertion covariance}
\label{app:carrier-covariance}

This appendix supplies the representation identities used in
section~\ref{sec:parity-carrier}.  It first proves the orbit isometry and fixes
the annular braid and hoop convention.  It then establishes covariance of the
row and the uncompressed cut mark under neutral two-site insertions along each
parity tower.  Throughout, the \(A_4\) representation describes the same physical
chain and cut measurement rule.

\subsection{Orbit isometry and annular convention}

For proposition~\ref{prop:orbit-isometry}, fixing one initial height gives a
unique lift of each Fibonacci word, and global reflection gives the second.
The endpoint of either lift is \(\rho^L(h_0)\), so both parities are covered
by \eqref{eq:parity-complete-carrier}.  Distinct reflection orbits are
orthogonal, proving isometry.  The weights in \eqref{eq:A4-data} are constant
on the two orbits in \eqref{eq:A4-orbit-quotient}; substituting them in the
RSOS generator therefore reproduces the Fibonacci generator coefficient by
coefficient, in the standard critical \(A\)-type face representation
\cite{AndrewsBaxterForrester1984,Pasquier1987,OwczarekBaxter1987}.
Shifting a lifted path proves the translation identity, and
membership of \(h_j\) in the outer orbit proves the cut identity.

In the pinned annular convention the braid-transfer centre
\(F_L^{\mathrm{br}}\) is related to the physical hoop by
\begin{equation}
  F_L^{\mathrm{br}}U_L
  =
  (-1)^L U_LY_{\tau,L}.
  \label{eq:braid-hoop-convention}
\end{equation}
The sign is one local orientation factor per face; all intermediate vertex
gauges telescope around the identity or reflection closure.  The oriented
annular hoops and their affine-TL central characters use the periodic-TL
defect convention of \cite{MartinSaleur1994,BelleteteEtAl2023}.

\subsection{Stability along the two parity towers}
\label{sec:length-stability}

Write \(t_L(u_1,\ldots,u_L)\) for the inhomogeneous row built from
\eqref{eq:critical-face-weight}, with the seam fixed by
\eqref{eq:parity-complete-carrier}.  Let
\(C_{i,L}:\mathsf M_L^{\mathrm{pc}}\to
\mathsf M_{L+2}^{\mathrm{pc}}\) be the normalized Perron--Frobenius cup
that inserts a neutral segment \((d,x,d)\) at position \(i\).  Because
\(\Theta_{L+2}=\Theta_L\), this insertion stays inside the same parity
tower.  The Temperley--Lieb fusion relation
\cite{TemperleyLieb1971,PasquierSaleur1990} fuses a spectral pair
\(u,u-\lambda\) through the cup and gives the row covariance
\begin{equation}
  t_{L+2}(\ldots,u,u-\lambda,\ldots)C_{i,L}
  =
  \mu(u)C_{i,L}t_L(\ldots),
  \qquad
  \mu(u)=
  \frac{\sin u\,\sin(2\lambda-u)}{\sin^2\lambda}.
  \label{eq:row-length-covariance}
\end{equation}
This is the lattice expression of a neutral two-site insertion: it changes
the length but not the annular sector carried by the row.

The cut mark is stable under the same move.  If
\(\widetilde\Pi_i\) marks the old anchor,
\(\widetilde\Pi_x\) the inserted height, and
\(\widetilde\Pi_d\) the repeated height, then
\begin{equation}
  C_{i,L}^{\dagger}\widetilde\Pi_xC_{i,L}
  =
  \varphi^{-2}(\Id-\widetilde\Pi_i),
  \qquad
  C_{i,L}^{\dagger}\widetilde\Pi_dC_{i,L}
  =
  \widetilde\Pi_i.
  \label{eq:marked-cup-closure}
\end{equation}
Hence every neutral insertion sends the marked output back into
the marked span in the \(A_4\) representation,
\(\operatorname{span}\{\Id,\widetilde\Pi_i\}\), which is transported along
both \(L\mapsto L+2\) towers.  This covariance fixes the uncompressed mark
under length change.  At odd length, the homogeneous ground state supplies
the Born coordinate \(q_L\) between the two characters of the physical
compressed algebra \eqref{eq:physical-cut-algebra}.

%% file: appendices/perron-response.tex
\section{Perron-row and marked-root technical lemmas}
\label{app:perron-response}

This appendix supplies the two technical ingredients used in
sections~\ref{sec:perron-ground} and \ref{sec:marked-response}.  The first
subsection identifies the common positive Perron ray representing the
transported antiferromagnetic ground state.  The second derives the Hamiltonian
and marked-root identities that equate the Perron derivative with the
cut-charge Born response.

\subsection{The common positive row ray}

\begin{lemma}[Common positive row ray]
\label{lem:common-positive-row}
For \(0<u<\lambda\), the rows \(t_L(u)\) share a unique strictly
positive ray.  We denote its unit vector by \(v_L\).
\end{lemma}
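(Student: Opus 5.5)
The statement concerns the rows \(t_L(u)\) for \(0<u<\lambda\): a family of commuting matrices on \(\mathsf M_L^{\mathrm{pc}}\), each built from the critical face weight \eqref{eq:critical-face-weight}. The plan has three steps. First, show that each row is entrywise nonnegative and, in fact, irreducible on a suitable invariant subspace. Next, apply Perron--Frobenius to each row separately to get a simple strictly positive eigenray. Finally, use the commutation \([t_L(u),t_L(v)]=0\) to conclude that all these rays coincide.

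\textbf{Nonnegativity.} For \(0<u<\lambda=\pi/5\) both \(A(u)=\sin(\lambda-u)/\sin\lambda\) and \(B(u)=\sin u/\sin\lambda\) are strictly positive. The \(A_4\) Temperley--Lieb generator in the face representation has nonnegative entries, with coefficients built from ratios of the Perron--Frobenius weights \(\psi\) in \eqref{eq:A4-data}. Hence every face weight \(W(u)=A(u)\Id+B(u)e\) is entrywise nonnegative in the height basis. The row \(t_L(u)\) is a sum over intermediate heights of products of such weights, closed by the seam \(\Theta_L\), so it is entrywise nonnegative as well.

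\textbf{Irreducibility.} This is the step I expect to be the main obstacle, and the natural place to work is the reflection-even subspace \(\ker(\kappa_L-\Id)=\Ran U_L\). Row transfer matrices of RSOS type can be bipartite or periodic. The \(A_4\) graph is bipartite, and the row may shift height parity, so \(t_L(u)\) need not be primitive on all of \(\mathsf M_L^{\mathrm{pc}}\). On the reflection-even subspace, the orbit basis \(U_L\ket{x}\) makes the row a nonnegative matrix indexed by Fibonacci words. The first thing I would try is to expand \(t_L(u)\) to first order around its regular point. There the row reduces to a shift times \((\Id+u\,\sum_j e_j/\ldots)\), and the positive expansion coefficients dominate. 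The irreducibility of \(\sum_j e_j\) established in lemma~\ref{lem:positive-ground} (via the vacuum-to-\(\tau\) move graph) then carries over, through the intertwining \eqref{eq:orbit-TL-translation}, to \(t_L(u)\) throughout the open interval, because the zero pattern of a nonnegative matrix with positive coefficients does not change as \(u\) varies. If periodicity from the translation factor obstructs primitivity, I would instead invoke Perron--Frobenius for irreducible, not necessarily primitive, nonnegative matrices. That version still gives a simple spectral-radius eigenvalue with a strictly positive eigenvector, which is all the statement requires.

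\textbf{Common ray.} Commutativity means that \(t_L(v)\) maps the Perron eigenspace of \(t_L(u)\) into itself. That eigenspace is one-dimensional, so the positive Perron vector \(w_u\) of \(t_L(u)\) is an eigenvector of \(t_L(v)\). A nonnegative irreducible matrix has only one eigenvector with strictly positive entries, namely its Perron vector. Hence \(w_u\) is proportional to \(w_v\). Normalizing to unit length with positive coordinates defines \(v_L\), independent of \(u\in(0,\lambda)\). Uniqueness of the strictly positive ray follows from the same fact applied to any single row.
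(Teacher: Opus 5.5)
Your overall structure matches the paper's proof: entrywise nonnegativity from \(A(u),B(u)>0\), irreducibility from the support of the shift and the local moves, Perron--Frobenius row by row, and commutativity to identify the rays. The gap is in where you prove irreducibility.

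The rows act on the full height space \(\mathsf M_L^{\mathrm{pc}}\), and the lemma needs uniqueness of the positive ray there. The later ``simple Perron ray of \(T_L\)'' and the simple root \(\Lambda_L(s)\) of the height-space congruence rely on the same fact. You restrict to \(\ker(\kappa_L-\Id)=\Ran U_L\) and import connectivity from lemma~\ref{lem:positive-ground} through \(U_L\). That only gives a unique positive ray \emph{inside} \(\Ran U_L\).

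The Fibonacci quotient identifies the two lifts \(a(x)\) and \(\rho a(x)\) of each word. Its connectivity therefore cannot rule out that the height-space support splits into two \(\kappa_L\)-exchanged components. If it did, the spectral radius would also be attained by a reflection-odd eigenvector \(w\), and \(v_L+\epsilon w\) would be a second strictly positive eigenvector for small \(\epsilon\). So your closing sentence (``uniqueness follows from the same fact applied to any single row'') only covers the restricted row.

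The missing ingredient is connectivity between the two lifts of the same word. The paper obtains it directly on \(\mathsf M_L^{\mathrm{pc}}\) in two steps:
\begin{itemize}
\item local moves take every height path to an all-inner-orbit path;
\item the twisted shift exchanges the two all-inner lifts; at \(L=3\), \((2,3,2,3)\mapsto(3,2,3,2)\).
\end{itemize}
The full row is then irreducible, and \(\kappa_Lv_L=v_L\) follows from uniqueness rather than being assumed by your choice of subspace.

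A secondary point concerns the shift factor. Irreducibility of \(\sum_je_j\) does not by itself carry over to a row whose support is the shift composed with local moves. A permutation times an irreducible matrix can be reducible: with \(P=M=\begin{pmatrix}0&1\\1&0\end{pmatrix}\), \(PM=\Id\). What rescues the argument is that the pure shift lies in the support, with coefficient \(A(u)^L>0\), and has finite order. Each shift orbit therefore sits in one strong component. The translation factor thus raises an irreducibility issue, not only the primitivity issue you flagged.

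For the support itself, the first-order expansion at \(u=0\) is not what does the work. The constancy of the zero pattern you invoke is the right tool. Each entry of \(t_L(u)\) is a product of face weights that are positive exactly on admissible faces, so the support is independent of \(u\in(0,\lambda)\).
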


\begin{proof}
The commuting-row and crossing relations are those of the critical
\(A_4\) face model \cite{AndrewsBaxterForrester1984,OwczarekBaxter1987,
Baxter1982}.  Crossing symmetry makes \(T_L\) real symmetric.  The endpoint
Perron--Frobenius factors telescope through the odd seam because
\(\psi_{\rho(h)}=\psi_h\).  Throughout \(0<u<\lambda\), every nonzero
local face weight is positive and the row support contains the twisted
shift together with all shifted local Temperley--Lieb moves.  These moves
connect every height orbit to the all-inner-orbit path and are reversible,
so every row in the open interval is irreducible and nonnegative.
Perron--Frobenius gives a unique positive ray for each row.  Commutativity
then forces the rows to share that ray.  Global reflection commutes with
the rows and maps a positive Perron vector to another one; uniqueness gives
\(\kappa_Lv_L=v_L\), hence \(v_L\in\Ran U_L\).
\end{proof}

\subsection{Hamiltonian and marked-root derivatives}

At the regular point, \(t_L(0)\) is the invertible twisted shift.  The
standard logarithmic-derivative construction of the RSOS Hamiltonian
\cite{Pasquier1987,PasquierSaleur1990} applied to
\eqref{eq:critical-face-weight} gives
\begin{equation}
  t_L'(0)t_L(0)^{-1}
  =
  -L\cot\lambda\,\Id
  +\frac1{\sin\lambda}\sum_{j=1}^{L}e_j
  =
  -L\cot\lambda\,\Id
  -\frac1{\sin\lambda}H_L^{\mathrm{AFM}}.
  \label{eq:row-Hamiltonian-derivative}
\end{equation}
The twisted shift removes the row translation before the local derivatives
are summed, so the last term includes the automorphism seam as well as the
ordinary bonds.  Differentiating row commutativity at \(u=0\) yields
\([T_L,H_L^{\mathrm{AFM}}]=0\) under the intertwiner
\eqref{eq:orbit-TL-translation}.  The AFM ground line is simple and
strictly positive by lemma~\ref{lem:positive-ground}; the isotropic row has
only the positive ray of lemma~\ref{lem:common-positive-row}.  The two
lines therefore coincide, proving theorem~\ref{thm:perron-ground-identity}.

For the deformation \eqref{eq:marked-row-deformation},
\begin{equation}
  \dot T_L(0)
  =
  \frac12
  \bigl(\widetilde\Pi_LT_L+T_L\widetilde\Pi_L\bigr).
  \label{eq:marked-row-derivative}
\end{equation}
Since \(T_L\) is symmetric and its Perron eigenvalue is simple,
Hellmann--Feynman differentiation yields
\(\dot\Lambda_L(0)=
\Lambda_L(0)v_L^{\mathsf T}\widetilde\Pi_Lv_L\).
Division by \(\Lambda_L(0)\), followed by
\eqref{eq:orbit-cut-intertwiner} and
\eqref{eq:perron-ground-identity}, proves
theorem~\ref{thm:marked-perron-response}.

%% file: appendices/odd-packet.tex
\section{Selected odd sector and limits of central data}
\label{app:odd-packet}

We spell out the premise behind \eqref{eq:selected-odd-packet}.  The only
external representation-theoretic input is the restricted \(A_4\) sector
classification.  In this classification, the oriented-hoop sectors
\(\mathcal H_{xy}[L]\) occurring in the RSOS representation are simple.  Their
fundamental hoop characters are \(2\cos(x\pi/5)\) and
\(2\cos(y\pi/5)\).  The reflection-folded odd representation contains
\begin{equation}
  \widehat{\mathcal H}_L
  \cong
  \mathcal H_{13}[L]\oplus\mathcal H_{22}[L].
  \label{eq:odd-folded-sector-decomposition}
\end{equation}
This restricted statement, rather than semisimplicity of every affine-TL
standard module at a fifth root, is what we use
\cite{MartinSaleur1994,GrahamLehrer1998,BelleteteEtAl2023,
PinetSaintAubin2023}.

Let \(Y_2\) denote the source fundamental hoop.  Under the object-\(3\)
folding of appendix~\ref{app:pointed-height}, the physical Fibonacci hoop is
the reflection-invariant crossed defect
\begin{equation}
  Y_\tau=Y_2^{(3)}=Y_2^2-\Id.
  \label{eq:physical-crossed-hoop}
\end{equation}
Its two characters in \eqref{eq:odd-folded-sector-decomposition} are
\begin{equation}
  Y_\tau|_{\mathcal H_{13}}=\varphi\Id,
  \qquad
  Y_\tau|_{\mathcal H_{22}}=-\varphi^{-1}\Id.
  \label{eq:odd-folded-hoop-characters}
\end{equation}
Comparison with \eqref{eq:hoop-projectors} identifies
\(\mathcal H_{13}=\Ran E_+\) and
\(\mathcal H_{22}=\Ran E_-\).  In particular, the odd ground ray selected in
proposition~\ref{prop:physical-hoop-sector} belongs to
\(\mathcal H_{22}\).

It remains to determine the affine-TL through-line and twist labels without
a dimension guess.  Define the one-through diagram
\begin{equation}
  m_L=e_2e_4\cdots e_{L-1}.
  \label{eq:odd-one-through-diagram}
\end{equation}
Every factor is nonnegative in the symmetric Fibonacci gauge, and the
all-\(\tau\) diagonal contribution is strictly positive.  Since
\(\Omega_L\) is strictly positive,
\(m_L\Omega_L\neq0\).  A standard module with more than one through line is
annihilated by every one-through diagram, so odd parity forces one through
line, or \(k=1/2\).

For the twist, put \(\mathfrak q=e^{i\pi/5}\) and
\(\alpha=(-\mathfrak q)^{1/2}=e^{3\pi i/5}\).  The two object-\(3\) oriented
hoop characters on \(\mathcal H_{22}\) are both \(-\varphi^{-1}\), whereas
the \((k,z)=(1/2,z)\) standard-module characters are
\begin{equation}
  Y=z\alpha+z^{-1}\alpha^{-1},
  \qquad
  \overline Y=z\alpha^{-1}+z^{-1}\alpha.
  \label{eq:odd-oriented-standard-characters}
\end{equation}
Their equality gives \(z^2=1\).  The sign \(z=1\) yields
\(2\cos(3\pi/5)=-\varphi^{-1}\), while \(z=-1\) gives the opposite sign.
Thus \(z=1\), and the affine-TL classification proves
\begin{equation}
  \mathcal H_{22}[L]
  \cong
  \overline W_{(1/2,1)}(L).
  \label{eq:odd-simple-packet-proof}
\end{equation}
This is the simple odd module selected by the odd positive ground ray.  No
second physical sector or cut has been introduced.

Simplicity and Schur's lemma give the relevant limitation of central data.
Every central periodic-TL or fusion endomorphism \(Z\) acts as a scalar on
\(\mathcal H_{22}[L]\).  Choose unit vectors
\(b\in\Ran B_L\) and \(p\in\Ran P_{\mathrm{ex},L}\), which exist by
proposition~\ref{prop:odd-cut-algebra}.  Their vector states therefore agree
on every such \(Z\), while orthogonality of the two projective ranges gives
\(\omega_B(B_L)=1\) and \(\omega_{\mathrm{ex}}(B_L)=0\).  This proves
\eqref{eq:central-blindness-counterexample}.

Consequently no state-independent function of central characters can recover
\(\omega(B_L)\) for every state in the selected simple odd module.  This
limitation concerns only central data.  The marked noncentral observable in
theorem~\ref{thm:physical-pointed-height-pencil} retains precisely the
coordinate that the centre cannot see.

%% file: appendices/pointed-height.tex
\section{Height-space maps at the marked site, covariance, and scope of the ambient identity}
\label{app:pointed-height}

This appendix assembles the typed height-space construction behind
theorem~\ref{thm:physical-pointed-height-pencil}.  It first relates the two
path spaces, then proves reflection compression for the height-space operator
family and cut-projector covariance.  The final subsection records an ambient
vertex identity on a separate representation space.  That identity delimits the
construction's type boundary and does not define the physical
state--observable dictionary.

\subsection{The two path maps}

Let \(\rho(h)=5-h\).  Write \(\mathsf H_L^{(2,\rho)}\) for the \(A_4\)
paths generated by fusion with object \(2\) and closed by \(\rho\), and
\(\mathsf H_L^{(3,{\rm per})}\) for the paths generated by object \(3\)
with periodic closure.  These are the standard restricted \(A\)-type path
spaces \cite{AndrewsBaxterForrester1984,Pasquier1987,PasquierSaleur1990}.
On the former define the basis permutation
\begin{equation}
  U_L^{\rm stag}:
  \mathsf H_L^{(2,\rho)}
  \longrightarrow
  \mathsf H_L^{(3,{\rm per})},
  \qquad
  \bigl(U_L^{\rm stag}h\bigr)_j
  =
  \rho^{\,j-1}(h_j).
  \label{eq:odd-staggered-path-map}
\end{equation}
It turns the reflection seam into periodic closure and restricts to a
unitary between the reflection-even subspaces.  On periodic object-\(3\)
paths set
\begin{equation}
  \pi(1)=\pi(4)=\mathbf 1,
  \qquad
  \pi(2)=\pi(3)=\tau,
  \qquad
  F_L^{\rm fold}
  \frac{\ket g+\ket{\rho g}}{\sqrt2}
  =
  \ket{\pi(g)}.
  \label{eq:object-three-fold}
\end{equation}
Every periodic Fibonacci path has exactly two object-\(3\) lifts related by
\(\rho\).  Consequently \(F_L^{\rm fold}\) is unitary from the
reflection-even periodic object-\(3\) path space to
\(\widehat{\mathcal H}_L\), and
\begin{equation}
  V_L^{\rm ht}:=(F_L^{\rm fold})^{-1}
  \label{eq:physical-height-unitary}
\end{equation}
is a physical Fibonacci-to-height unitary.  Both path maps preserve every
paired-height cut and intertwine the complete face row in their respective
path spaces.  In particular, \(V_L^{\rm ht}\) maps
\(\Ran E_{-,L}\) onto \(\mathcal H_{22}[L]\).

For the path identities, write \(g_j=\rho^{j-1}(h_j)\).  The relation
\(3=\rho(2)\) converts each object-\(2\) step into an object-\(3\) step.
Oddness gives \(g_{L+1}=g_1\), and
\(\rho\{1,4\}=\{1,4\}\) preserves the cut.  For the fold, the object-\(3\)
fusion rules reduce under \(\pi\) to the Fibonacci fusion rules.  Choosing
one initial lift fixes the rest, while its global reflection supplies the
second choice.
The two lifts are orthogonal and reflection has no fixed height, proving
unitarity.  Expanding the face row in local Temperley--Lieb monomials then
proves the row intertwining coefficient by coefficient.

\subsection{Reflection compression and the height-space operator family}

Let \(H_j\) be the diagonal height at the cut on the periodic
object-\(3\) height space, and define
\begin{equation}
  D_j(t)=t^{2H_j-5},
  \qquad
  P_{3,j}=\mathbf 1_{\{H_j\in\{1,4\}\}},
  \qquad
  P_{1,j}=\Id-P_{3,j},
  \qquad
  c_r(t)=\frac{t^r+t^{-r}}2.
  \label{eq:pointed-height-mark}
\end{equation}
Throughout the main text and this appendix, the physical observable
statement for this operator family is restricted to \(t>0\).  In this domain
\(D_j(t)\) is a positive Hermitian diagonal operator, and its compression and
pullback are Hermitian observables.  For complex \(t\neq0\), the same equations
define only an algebraic generating family.  They carry no Hermitian-observable
claim.
The subscripts \(3\) and \(1\) label the two reflection-paired height
orbits.  With \(\Pi_+=(\Id+R)/2\), reflection compression gives
\begin{equation}
  \Pi_+D_j(t)\Pi_+
  =
  c_3(t)P_{3,j}+c_1(t)P_{1,j}.
  \label{eq:height-reflection-compression}
\end{equation}
Put
\begin{equation}
  \Pi_{22}^{\rm ht}
  =
  V_L^{\rm ht}E_{-,L}(V_L^{\rm ht})^{-1}
  \label{eq:selected-height-projector}
\end{equation}
Transport of the cut projector by \(V_L^{\rm ht}\), together with
\eqref{eq:odd-sector-compression}, gives
\begin{equation}
  \Pi_{22}^{\rm ht}P_{3,j}\Pi_{22}^{\rm ht}
  =
  cV_L^{\rm ht}B_{L,j}(V_L^{\rm ht})^{-1}.
  \label{eq:height-cut-compression}
\end{equation}
Substituting this identity and \(P_{1,j}=\Id-P_{3,j}\) into
\eqref{eq:height-reflection-compression} proves
\eqref{eq:physical-pointed-height-pencil}.  Taking the ground-state expectation
then proves \eqref{eq:physical-two-character-response}.

The raw height-orbit effects pulled back to the selected odd sector are
\begin{equation}
  \mathsf E_{3,L,j}=cB_{L,j},
  \qquad
  \mathsf E_{1,L,j}
  =
  I_{{\rm odd},L}-cB_{L,j}
  =
  P_{\mathrm{ex},L,j}+\delta B_{L,j},
  \label{eq:raw-height-effects}
\end{equation}
whereas the two projective characters are
\((B_{L,j},P_{\mathrm{ex},L,j})\).  Thus the raw \(P_{1,j}\) orbit is not
the complementary projector \(P_{\mathrm{ex},L,j}\).

\subsection{Cut-projector covariance}

If \(S_L\) is physical one-site translation and every cut-indexed object is
defined by conjugating the reference cut, then
\begin{align}
  \Pi_{L,j+1}&=S_L\Pi_{L,j}S_L^{-1},
  &
  B_{L,j+1}&=S_LB_{L,j}S_L^{-1},
  \notag\\
  P_{\mathrm{ex},L,j+1}
  &=S_LP_{\mathrm{ex},L,j}S_L^{-1},
  &
  K_{L,j+1}(t)&=S_LK_{L,j}(t)S_L^{-1}.
  \label{eq:cut-indexed-covariance}
\end{align}
The unique positive ground ray is translation invariant, so all cut positions
have the same \(q_L\).  This is covariance of a physical path family, not
invariance of one fixed cut chart.

\subsection{Scope of the ambient vertex identity}

On the standard ambient six-vertex tensor product
\((\mathbb C^2)^{\otimes L}\) \cite{Baxter1982,KitanineMailletTerras1999},
for \(L=2M+1\), define
\begin{equation}
  G_{{\rm pair},L}^{(1)}
  =
  \prod_{m=1}^{M}\sigma_{2m+1}^{x},
  \qquad
  S_{\rm stag}^{\rm vert}
  =
  \sum_{k=1}^{L}(-1)^{k-1}\sigma_k^z.
  \label{eq:ambient-pair-gauge}
\end{equation}
The coefficientwise vertex identity is
\begin{equation}
  G_{{\rm pair},L}^{(1)}
  S_{\rm stag}^{\rm vert}
  G_{{\rm pair},L}^{(1)}
  =
  \sigma_{\partial_1}^{z}
  -
  \sum_{k=2}^{L}\sigma_k^z,
  \qquad
  V_{\partial_1}=V_1.
  \label{eq:ambient-boundary-bulk-spin}
\end{equation}
Equation~\eqref{eq:ambient-boundary-bulk-spin} acts only on the ambient
six-vertex tensor product.  No proved map from the restricted height space to that
tensor product simultaneously preserves the complete row, selected odd sector,
cut projector, and ordinary Hilbert pairing.  It therefore supplies neither the
physical projector dictionary nor an expectation-preservation formula.  The
physical proof edge is the unitary height-space map at the marked site
\(V_L^{\rm ht}\), not
the ambient pair gauge.

%% file: appendices/odd-evaluator.tex
\section{Spectral-adjugate evaluation of the odd coordinate}
\label{app:odd-evaluator}

This appendix proves the exact finite-size formula used in
section~\ref{sec:exact-odd-evaluation}.  The construction stays on the
physical Fibonacci fusion-path Hilbert space throughout.  It requires neither
an auxiliary spin-chain expectation map nor a Bethe-root normalization.

\subsection{Physical spectral data}
\label{sec:odd-physical-spectral-data}

On the orthonormal cyclic fusion-path basis of
\(\widehat{\cH}_L\), define
\begin{equation}
  \mathcal K_L:=-H_L^{\mathrm{AFM}}
  =\sum_{j=1}^{L}e_j .
  \label{eq:app-odd-kernel}
\end{equation}
Its order is fixed by the Fibonacci adjacency matrix:
\begin{equation}
  n_L:=\dim\widehat{\cH}_L
  =\Tr N^L
  =\operatorname{Fib}_{L-1}+\operatorname{Fib}_{L+1}.
  \label{eq:app-odd-dimension}
\end{equation}
Let
\begin{equation}
  p_L(z):=\det(z\Id_{n_L}-\mathcal K_L),
  \qquad
  \varepsilon_L:=\max\{z\in\mathbb R:p_L(z)=0\},
  \label{eq:app-odd-characteristic-data}
\end{equation}
and set
\begin{equation}
  \mathcal A_L
  :=\operatorname{adj}(\varepsilon_L\Id_{n_L}-\mathcal K_L).
  \label{eq:app-odd-adjugate}
\end{equation}
Lemma~\ref{lem:positive-ground} makes \(\varepsilon_L\) a simple
eigenvalue with normalized eigenvector \(\Omega_L\).

\subsection{Projector and cut-coordinate formula}
\label{sec:odd-adjugate-proof}

\begin{proposition}[Physical spectral adjugate]
\label{prop:physical-spectral-adjugate}
For every \(L\geq3\),
\begin{equation}
  \frac{\mathcal A_L}{\Tr\mathcal A_L}
  =|\Omega_L\rangle\langle\Omega_L|.
  \label{eq:app-normalized-adjugate}
\end{equation}
For every odd \(L\geq3\),
\begin{equation}
  q_L
  =\frac{1}{c}\,
    \frac{\Tr(\Pi_L\mathcal A_L)}
         {\Tr\mathcal A_L}.
  \label{eq:app-odd-adjugate-ratio}
\end{equation}
\end{proposition}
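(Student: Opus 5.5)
The plan is to derive the normalized-adjugate identity from the spectral decomposition of the real symmetric matrix \(\mathcal K_L\). We then obtain the odd-coordinate formula by tracing against \(\Pi_L\) and invoking the odd-sector compression.

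\emph{Step 1: spectral data.} In the real orthonormal fusion-path basis, \(\mathcal K_L\) is real symmetric, since each \(e_j\) is Hermitian. Write \(\mathcal K_L=\sum_k\lambda_k\,|u_k\rangle\langle u_k|\) with an orthonormal eigenbasis. By lemma~\ref{lem:positive-ground}, the largest eigenvalue \(\varepsilon_L\) is simple, and its normalized eigenvector is \(\Omega_L\), taken strictly positive. All other eigenvalues satisfy \(\lambda_k<\varepsilon_L\). (The Perron root of the shifted nonnegative irreducible matrix is the spectral maximum, so it is also the largest real root of \(p_L\).)

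\emph{Step 2: adjugate.} For any matrix \(M\) with spectral decomposition \(M=\sum_k\mu_k|u_k\rangle\langle u_k|\) in an orthonormal basis, we have \(\operatorname{adj}M=\sum_k\bigl(\prod_{i\neq k}\mu_i\bigr)|u_k\rangle\langle u_k|\). To see this, check it for invertible \(M\) using \(\operatorname{adj}M=\det(M)M^{-1}\), and extend by continuity, since both sides are polynomial in the \(\mu_i\) for a fixed eigenbasis. Apply this to \(M=\varepsilon_L\Id-\mathcal K_L\), whose eigenvalues are \(\mu_0=0\) on \(\Omega_L\) and \(\mu_k=\varepsilon_L-\lambda_k>0\) otherwise. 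Every term with \(k\neq0\) contains the factor \(\mu_0=0\) and therefore vanishes. This leaves
\[
\mathcal A_L=\prod_{k\neq0}(\varepsilon_L-\lambda_k)\,|\Omega_L\rangle\langle\Omega_L|,
\]
with a strictly positive prefactor. Its trace equals that prefactor, because \(\lVert\Omega_L\rVert=1\). Dividing by the trace gives \eqref{eq:app-normalized-adjugate} for every \(L\geq3\).

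\emph{Step 3: odd formula.} Multiplying by \(\Pi_L\) and taking the trace gives \(\Tr(\Pi_L\mathcal A_L)/\Tr\mathcal A_L=\langle\Omega_L,\Pi_L\Omega_L\rangle=\chi_L\). For odd \(L\), proposition~\ref{prop:physical-hoop-sector} places \(\Omega_L\) in \(\Ran E_{-,L}\). Hence \(\langle\Omega_L,\Pi_L\Omega_L\rangle=\langle\Omega_L,E\Pi E\,\Omega_L\rangle\), and proposition~\ref{prop:odd-cut-algebra} gives \(E\Pi E=cB_L\), so this equals \(cq_L\) (this is \eqref{eq:odd-ground-cut-weight}). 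Since \(c=\varphi^2/\cD^2>0\), dividing by \(c\) yields \eqref{eq:app-odd-adjugate-ratio}.

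The only delicate point is Step 2: we must confirm that the prefactor is nonzero, and positive, so that the trace normalization is legitimate. This rests entirely on the simplicity of the top eigenvalue from lemma~\ref{lem:positive-ground}, together with the fact that \(\varepsilon_L\) is the maximal real root. The remaining steps are direct substitutions of earlier results.
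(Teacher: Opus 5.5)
Your proof is correct and follows the paper's route. The one difference is in Step 2. The paper gets the rank-one form from \(M_L\mathcal A_L=\det(M_L)\Id=0\) plus symmetry, and gets a nonzero coefficient from Jacobi's identity \(\Tr\mathcal A_L=p_L'(\varepsilon_L)\neq0\). Your spectral formula for the adjugate gives both at once, and its prefactor \(\prod_{k\neq0}(\varepsilon_L-\lambda_k)\) is exactly \(p_L'(\varepsilon_L)\), visibly positive. Step 3 is the paper's argument, using \(E_{-,L}\Omega_L=\Omega_L\) and \(E_{-,L}\Pi_LE_{-,L}=cB_L\).
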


\begin{proof}
The matrix
\[
  M_L:=\varepsilon_L\Id_{n_L}-\mathcal K_L
\]
is real symmetric and has rank \(n_L-1\).  Since
\begin{equation}
  M_L\mathcal A_L
  =\det(M_L)\Id_{n_L}=0,
  \label{eq:app-adjugate-kernel}
\end{equation}
every column of \(\mathcal A_L\) lies on the one-dimensional ground
line.  Symmetry therefore gives
\[
  \mathcal A_L
  =\alpha_L|\Omega_L\rangle\langle\Omega_L|.
\]
The coefficient is nonzero because Jacobi's identity gives
\begin{equation}
  \Tr\mathcal A_L=p_L'(\varepsilon_L)\neq0
  \label{eq:app-adjugate-trace}
\end{equation}
at a simple root.  Taking the trace proves
\eqref{eq:app-normalized-adjugate}.

For odd \(L\), propositions~\ref{prop:physical-hoop-sector} and
\ref{prop:odd-cut-algebra} give
\begin{equation}
  E_{-,L}\Omega_L=\Omega_L,
  \qquad
  E_{-,L}\Pi_LE_{-,L}=cB_L.
  \label{eq:app-odd-physical-inputs}
\end{equation}
Hence
\begin{equation}
  \frac{\Tr(\Pi_L\mathcal A_L)}
       {\Tr\mathcal A_L}
  =\langle\Omega_L,\Pi_L\Omega_L\rangle
  =c\langle\Omega_L,B_L\Omega_L\rangle
  =cq_L ,
  \label{eq:app-cut-trace}
\end{equation}
which proves \eqref{eq:app-odd-adjugate-ratio}.
\end{proof}

The formula is intrinsic.  An orthogonal change of physical path basis
conjugates \(\mathcal K_L\), \(\Pi_L\), and \(\mathcal A_L\)
simultaneously and leaves both traces in
\eqref{eq:app-odd-adjugate-ratio} unchanged.

\subsection{Cofactor form and exactness limits}
\label{sec:odd-cofactor-form}

Let \(\mathfrak X_L\) be the cyclic Fibonacci path basis, and let
\(\mathfrak X_{L,\ell}^{(0)}\subset\mathfrak X_L\) contain the paths whose
label at the chosen fusion edge \(\ell\) is \(0\).  For \(x\in\mathfrak X_L\),
write \(M_L^{[x]}\) for the principal matrix obtained from \(M_L\) by
deleting the row and column indexed by \(x\).  Since the cut projector is
diagonal in this basis, the diagonal-cofactor identity gives
\begin{align}
  \Tr\mathcal A_L
  &=
  \sum_{x\in\mathfrak X_L}\det M_L^{[x]},
  \label{eq:app-total-cofactor-sum}\\
  \Tr(\Pi_L\mathcal A_L)
  &=
  \sum_{x\in\mathfrak X_{L,\ell}^{(0)}}\det M_L^{[x]}.
  \label{eq:app-cut-cofactor-sum}
\end{align}
Equations~\eqref{eq:app-odd-characteristic-data} and
\eqref{eq:app-total-cofactor-sum}--\eqref{eq:app-cut-cofactor-sum}
therefore constitute a finite algebraic prescription requiring no
ground-state eigenvector as input.

All entries of \(\mathcal K_L\) are algebraic, and \(\varepsilon_L\) is
specified without ambiguity as the largest real root of \(p_L\).  The result is
exact, although it remains implicit with matrix order
\(n_L\sim\varphi^L\).  This representation yields no \(O(L)\), subexponential
or polynomial-time evaluator, and no root-free function of \(L\).  It
establishes no uniform-conditioning theorem and makes no assertion about
thermodynamic laws or continuum limits.

As an exact small-size control, direct cofactor evaluation at \(L=3\)
gives
\begin{equation}
  \varepsilon_3=\frac{5+\sqrt5}{2},
  \qquad
  q_3=\frac{\sqrt5-2}{3}.
  \label{eq:app-l3-adjugate-control}
\end{equation}
At \(L=5\), an independent principal-cofactor construction and direct
normalized-ground-vector evaluation agree for \(q_5\) to better than
\(3\times10^{-17}\).  These checks test the implementation.  The proof of
proposition~\ref{prop:physical-spectral-adjugate} does not depend on stored
finite-size values.